\documentclass[12pt,a4paper]{article}
\usepackage[utf8]{inputenc}
\usepackage[english]{babel}
\usepackage{amsmath, amssymb, amsfonts}
\usepackage{amsthm}
\usepackage{physics}
\usepackage{graphicx}
\usepackage{cite}
\usepackage[colorlinks=true, linkcolor=blue, citecolor=blue, urlcolor=blue]{hyperref}
\usepackage{tikz}
\usepackage{geometry}
\usepackage{dsfont}
\usepackage{booktabs}
\usepackage{float}
\usepackage{cleveref}
\usepackage{authblk}
\usepackage{siunitx}
\theoremstyle{plain}
\newtheorem{theorem}{Theorem}[section]

\newtheorem{proposition}[theorem]{Proposition}
\theoremstyle{definition}
\newtheorem{definition}[theorem]{Definition}
\newtheorem{remark}[theorem]{Remark}

\newtheorem{corollary}[theorem]{Corollary}
\title{A Metric-Deformed $q$-Gauge Dirac Equation}
\author{Julio Cesar Jaramillo Quiceno$^a$\\		
$^a$jcjaramilloq@unal.edu.co\\Universidad Nacional de Colombia - Department of Physics - Faculty of mathematics\\
Ed. Yu Takeuchi \\\url{https://orcid.org/0000-0002-3518-6680}}
\date{\today}

\begin{document}

\maketitle

\begin{abstract}
We construct a family of metric-deformed gauge theories based on a recently introduced $q$-Dirac operator $D_q = \gamma^\mu \sqrt{|g^{\mu\mu}|}\partial_\mu$, which arises from a deformed D' Alembertian $\Box_q = |g^{00}|\partial_t^2 - \sum_i |g^{ii}|\partial_i^2$. The deformation parameter $q$ is related to the metric components via $q_\mu = \sqrt{|g^{\mu\mu}|}$. By promoting $g^{\mu\mu}(x)$ to spacetime-dependent background fields, we define a deformed covariant derivative $D_\mu^{(q)} = \partial_\mu + ieA_\mu(x)/\sqrt{|g^{\mu\mu}(x)|}$ (no sum over $\mu$). The corresponding field strength $F_{\mu\nu}^{(q)} = [D_\mu^{(q)}, D_\nu^{(q)}]$ acquires new terms proportional to $\partial_\mu(1/\sqrt{|g^{\nu\nu}|})$, which vanish for constant metrics. We write down gauge-invariant actions for deformed Yang-Mills theory and for fermions minimally coupled to $D_\mu^{(q)}$. This work provides a mathematical foundation for $q$-deformed gauge theories from a metric perspective.
\end{abstract}

\section{Introduction}
\label{sec:intro}

The interplay between quantum mechanics and spacetime geometry has been a central theme in theoretical physics for decades. The Heisenberg algebra \([x,p]=i\hbar\) encodes the fundamental non‑commutativity of position and momentum, while general relativity describes gravity through the metric tensor \(g_{\mu\nu}\) and special relativity imposes the invariance of the Minkowski norm. In recent years, \(q\)-deformed algebras have emerged as a powerful tool to model quantum corrections to classical geometries. The \(q\)-Heisenberg algebra, defined by \(aa^{\dagger}-qa^{\dagger}a=q^{-N}\), appears in contexts ranging from quantum groups \cite{Wess2000} to noncommutative geometry \cite{Connes1994}, \(q\)-deformed quantum mechanics \cite{Lavagno2009}, and quantum field theory. Despite these developments, a direct connection between the deformation parameter \(q\) and the geometric structure of spacetime has remained largely unexplored.

In a recent work \cite{Jaramillo2026a} we introduced a framework that bridges these two perspectives. Motivated by Sylvester’s theorem of inertia – which guarantees that any Lorentzian metric can be diagonalized to constants \(\pm1\) by a suitable coordinate transformation – we defined metric‑deformed Heisenberg algebras \(M_1\) and \(M_2\) whose commutation relations are expressed directly in terms of the metric components \(g^{\mu\nu}\). This construction provides a geometric interpretation of \(q\)-deformations: the deformation parameter arises naturally from the metric coefficients. For a diagonal Lorentzian metric \(g^{\mu\nu}=\operatorname{diag}(g^{00},g^{11},g^{22},g^{33})\) we established
\[
q_\mu = \sqrt{|g^{\mu\mu}|}, \qquad \text{(no sum over }\mu\text{)},
\]
so that the standard Heisenberg algebra is recovered when \(q_\mu=1\) for all \(\mu\) (i.e., when \(g^{\mu\mu}=\eta^{\mu\mu}=\operatorname{diag}(1,-1,-1,-1)\)).

From the deformed D’Alembertian
\[
\Box_q = |g^{00}|\partial_t^2 - |g^{11}|\partial_x^2 - |g^{22}|\partial_y^2 - |g^{33}|\partial_z^2,
\]
we constructed a \(q\)-Dirac operator
\[
D_q = \gamma^0\sqrt{|g^{00}|}\,\partial_t - \gamma^x\sqrt{|g^{11}|}\,\partial_x - \gamma^y\sqrt{|g^{22}|}\,\partial_y - \gamma^z\sqrt{|g^{33}|}\,\partial_z,
\]
which satisfies \(D_q^2 = \Box_q \mathbf{1}_4\). This construction was limited, however, to free fermions propagating in a constant metric background.

The systematic study of deformations of gauge theories has a rich history. Anco \cite{Anco2004} developed a general deformation method for gauge theories and uncovered a universal geometric structure characterized by a covariant derivative operator and a non‑linear field strength related through curvature. Quantum group deformations were pioneered by Castellani \cite{Castellani1992}, who explicitly constructed the bicovariant \(q\)-deformation of \(SU(2)\times U(1)\). Watts \cite{Watts1997} later showed that upon breaking a unified quantum symmetry \(SU_q(2)\), the residual gauge group is the classical \(U(1)\), not its quantum counterpart, justifying why the \(U(1)\) factor typically remains undeformed. In the context of non‑commutative spacetime deformations, Kupriyanov, Kurkov and Vitale \cite{Kupriyanov2020} constructed a \(\kappa\)-Minkowski deformation of \(U(1)\) gauge theory, obtaining exact expressions for the deformed gauge transformations and field strength that are covariant under the deformed symmetries. For the specific case of \(q\)-deformed gauge theories, Hayata, Hidaka and Watanabe \cite{Hayata2026} recently investigated the \((2+1)\)-dimensional \(q\)-deformed \(\mathrm{SU}(N)_k\) Yang–Mills theory in a lattice Hamiltonian formalism, analyzing confinement and topological order. From a mathematical perspective, van Nuland \cite{vanNuland2021} developed a strict deformation quantization of abelian lattice gauge fields, constructing classical and quantum field C*-algebras for a \(U(1)^n\)-gauge theory. More recently, Borsato and Meier \cite{BorsatoMeier2025} constructed non‑commutative deformations of gauge theories via Drinfel’d twists built from scale symmetry generators.

Our approach differs from the aforementioned ones in that the \(q\)-deformation is introduced not through an algebraic deformation of the gauge group nor through a non‑commutative star product, but through the spacetime metric itself. Specifically, we extend the minimal coupling prescription to the metric‑deformed setting by defining a deformed covariant derivative
\[
D_\mu^{(q)} = \partial_\mu + \frac{ie}{\sqrt{|g^{\mu\mu}(x)|}} A_\mu(x), \qquad \text{(no sum over }\mu\text{)},
\]
where \(A_\mu(x)\) is a gauge field and the metric components are promoted to spacetime dependent background fields. The associated deformed field strength is defined as the commutator \(F_{\mu\nu}^{(q)} = [D_\mu^{(q)}, D_\nu^{(q)}]\). In the limit \(g^{\mu\mu}\to\eta^{\mu\mu}\) both reduce to their standard QED counterparts.

Our main results are threefold. First, we construct the deformed field strength \(F_{\mu\nu}^{(q)}\) explicitly and show that, for non‑constant \(g^{\mu\mu}(x)\), it contains additional terms proportional to derivatives of the metric; these terms vanish when the metric is constant. Second, we write down gauge‑invariant actions for deformed Yang–Mills theory and for fermions minimally coupled to \(D_\mu^{(q)}\), proving their invariance under deformed gauge transformations. Third, we discuss the relation of our metric‑deformed approach to the existing literature on quantum groups, non‑commutative gauge theories and deformation quantization, situating our construction within the broader context of deformed gauge theories.

The paper is organized as follows. Section~\ref{sec:prelim} briefly recalls the metric‑deformed Dirac operator from \cite{Jaramillo2026a} and introduces the deformed Heisenberg algebras. Section~\ref{sec:deformedDV} defines the deformed covariant derivative, discusses the classical \(U(1)\) factor in quantum groups, and computes the deformed field strength. Section~\ref{sec:table-gauge-field} presents explicit tables of the \(q\)-gauge Dirac equation and the deformed field strength for several examples (new \(q\)-Heisenberg, \(q\)-generalized and \(q\)-\(\hbar\) Heisenberg algebras). Section~\ref{sec:gaugeactions} constructs gauge‑invariant actions for the deformed Yang–Mills and fermion sectors. Section~\ref{sec:conclusions} concludes with a summary and outlines directions for future research.
\section{Metric-deformed Heisenberg algebras and $q$-Dirac operator}
\label{sec:prelim}

In this section we briefly recall the essential notions from our previous work \cite{Jaramillo2026a} that are needed for the gauge construction. Here some deformed Heisenberg algebras

\begin{definition}($q-\hbar$ Heisenberg algebra \cite{Volovich-Arefeva-91})\label{Arefeva-Volovich1991definition}
     If $\hat{x}_{i},\hat{p}_{i}$ are generators over the quantum space phase, then $q-\hbar$ \textit{\textbf{Heisenberg algebra quantization}} is the algebra defined by the relation 
\begin{equation}\label{quantization}
     \hat{x}_{j}\hat{p}_{k}-q\hat{p}_{k}\hat{x}_{j}=i\hbar D_{jk}(q),
\end{equation}
     
where $D_{jk}(q)$ any function that depends on $q$.
\end{definition}
     
\begin{definition}
The \textbf{new $q$-$\hbar$ Heisenberg algebra} $\boldsymbol{\mathcal{H}}_{q}$ is generated by $\hat{x}_{\alpha}, \hat{y}_{\lambda}, \hat{p}_{\beta}$ $(\alpha,\beta,\lambda\in\{1,2,3\})$ subject to \cite{Jaramillo2025}:
\begin{align}
   \label{eq:new1} \hat{x}_{\alpha}\hat{p}_{\beta} - q^{n}\hat{p}_{\beta}\hat{x}_{\alpha} &= i q^{n-1}\hbar^{n}\, \Psi, \\
   \label{eq:new2} q^{m}\hat{x}_{\alpha}\hat{y}_{\lambda} - \hat{y}_{\lambda}\hat{x}_{\alpha} &= -i (q-1)^{m-1}\hbar^{m-1}\, \Pi, \\
    \label{eq:new3} q^{l}\hat{y}_{\lambda}\hat{p}_{\beta} - q^{l+1}\hat{p}_{\beta}\hat{y}_{\lambda} &= i \hbar^{l}\, \Phi,
\end{align}
where $\Psi,\Pi,\Phi$ are dynamical functions and $q\in\mathbb{R}\setminus\{0,1\}$. 
\end{definition}

\begin{definition}[$q$-generalized Heisenberg algebra \cite{Razavinia-Lopes2022}]
The algebra $\mathcal{H}_{q}(f,g)$ is generated by $\hat{x},\hat{y},\hat{h}$ with
\begin{equation}\label{eq:gen-Heis}
\hat{h}\hat{x}=\hat{x}f(\hat{h}),\quad \hat{y}\hat{h}=f(\hat{h})\hat{y},\quad \hat{y}\hat{x}-q\hat{x}\hat{y}=\hbar g(\hat{h}).
\end{equation}
For $f(\hat{h})=\hat{h}$, $g(\hat{h})=1$, we obtain the central extension
\[
\hat{h}\hat{x}=\hat{x}\hat{h},\quad \hat{y}\hat{h}=\hat{h}\hat{y},\quad \hat{y}\hat{x}-q\hat{x}\hat{y}=\hbar,
\]
so $\hat{h}$ is central. This recovers a $q$-deformed Heisenberg relation of the form $p_x x - q x p_x = -\hbar$ under $\hat{x}\leftrightarrow x$, $\hat{y}\leftrightarrow p_x$, $\hat{h}=q^n$.
\end{definition}

\subsection{Diagonal metric and the deformation parameter}

We consider a diagonal Lorentzian metric
\[
g^{\mu\nu} = \operatorname{diag}(g^{00}, g^{11}, g^{22}, g^{33}),
\qquad g^{00}>0,\; g^{11},g^{22},g^{33}<0.
\]

The deformation parameter $q$ is directly related to the metric components via
\begin{equation}\label{eq:q_metric_relation}
q_\mu := \sqrt{|g^{\mu\mu}|}, \qquad \text{(no sum over }\mu\text{)}.
\end{equation}
The standard (undeformed) case is recovered when $q_\mu = 1$ for all $\mu$, i.e., $g^{\mu\mu} = \eta^{\mu\mu} = \operatorname{diag}(1,-1,-1,-1)$.

\subsection{Metric-deformed Heisenberg algebras}

Two families of metric-deformed Heisenberg algebras, $M_1$ and $M_2$, were introduced in \cite{Jaramillo2026a}. Their defining relations are expressed directly in terms of the metric components $g^{\mu\nu}$. The full definitions, together with explicit realizations of $D_q$ corresponding to the various $q$-deformed Heisenberg algebras  (new $q$-Heisenberg \cite{Jaramillo2025}, $q$-$\hbar$ Heisenberg \cite{Volovich-Arefeva-91}, $q$-generalized Heisenberg \cite{Razavinia-Lopes2022}) are collected in Appendix~\ref{app:dirac-realizations}.

\subsection{The $q$-Dirac operator}

From the deformed D'Alembertian (see the reference \cite{Jaramillo2026a})
\[
\Box_q = |g^{00}|\partial_t^2 - |g^{11}|\partial_x^2 - |g^{22}|\partial_y^2 - |g^{33}|\partial_z^2,
\]
the $q$-Dirac operator is defined as
\begin{equation}\label{eq:Dq_definition}
D_q := \gamma^0\sqrt{|g^{00}|}\,\partial_t - \gamma^x\sqrt{|g^{11}|}\,\partial_x - \gamma^y\sqrt{|g^{22}|}\,\partial_y - \gamma^z\sqrt{|g^{33}|}\,\partial_z,
\end{equation}
where the gamma matrices satisfy the Clifford algebra $\{\gamma^\mu,\gamma^\nu\}=2\eta^{\mu\nu}\mathbf{1}_4$ with $\eta^{\mu\nu}=\operatorname{diag}(1,-1,-1,-1)$.  The $q$-Dirac operator $D_q$ describes free fermions in a constant metric background. To incorporate gauge interactions and spacetime-dependent metrics, we need to replace ordinary derivatives by a covariant derivative, which is the task of the next section.

\subsection{Realizations of the $q$-Dirac operator}

Explicit realizations of $D_q$ corresponding to the various $q$-deformed Heisenberg algebras  (new $q$-Heisenberg \cite{Jaramillo2026a}, $q$-$\hbar$ Heisenberg \cite{Volovich-Arefeva-91}, $q$-generalized Heisenberg \cite{Razavinia-Lopes2022}) are collected in section \ref{sec:table-gauge-field} and Appendix~\ref{app:dirac-realizations}.

\section{Deformed covariant derivative and gauge invariance}
\label{sec:deformedDV}

In this section we first recall the standard prescription for coupling a Dirac fermion to an external electromagnetic field via the minimal coupling principle. This leads to the ordinary covariant derivative. We then show how this construction is naturally extended to the metric-deformed framework introduced in the previous section, yielding a deformed covariant derivative that incorporates both the gauge interaction and the metric background.

\subsection{Standard minimal coupling and the covariant derivative}

\begin{definition}
The free Dirac equation in Minkowski spacetime is given by
\begin{equation}
(i\gamma^\mu\partial_\mu - m)\psi = 0. \label{eq:free_dirac}
\end{equation}

When an external electromagnetic field \(A_\mu(x)\) is present, the principle of minimal coupling prescribes the replacement \cite{Nakahara2003, Kane2017}
\begin{equation}
\partial_\mu \;\longrightarrow\; D_\mu^{\text{(EM)}} := \partial_\mu + ieA_\mu(x), \label{eq:minimal_coupling}
\end{equation}
where \(e\) is the electric charge of the fermion. This substitution ensures local \(U(1)\) gauge invariance. The resulting Dirac equation becomes
\begin{equation}
(i\gamma^\mu D_\mu^{\text{(EM)}} - m)\psi = 0, \qquad\text{equivalently}\qquad \bigl(i\gamma^\mu(\partial_\mu + ieA_\mu) - m\bigr)\psi = 0. \label{eq:dirac_em}
\end{equation}

The field strength (curvature) associated with the covariant derivative \(D_\mu^{\text{(EM)}}\) is defined as the commutator 
\begin{equation}
F_{\mu\nu} := [D_\mu^{\text{(EM)}}, D_\nu^{\text{(EM)}}] = ie(\partial_\mu A_\nu - \partial_\nu A_\mu), \label{eq:field_strength_std}
\end{equation}
which reproduces the standard electromagnetic tensor. Under a local \(U(1)\) gauge transformation \cite{Nakahara2003, Hu2024, Cacic2023}
\begin{equation}
\psi \to e^{ie\alpha(x)}\psi, \qquad A_\mu \to A_\mu - \partial_\mu\alpha(x), \label{eq:gauge_transform_std}
\end{equation}
the covariant derivative transforms covariantly:
\begin{equation}
D_\mu^{\text{(EM)}} \;\longrightarrow\; e^{ie\alpha} D_\mu^{\text{(EM)}} e^{-ie\alpha}. \label{eq:covariant_transform_std}
\end{equation}
\end{definition}
\subsection{Quantum groups and the classical \(U(1)\) factor}

\begin{proposition}[The $U(1)$ gauge group remains undeformed under $q$-deformations]
\label{prop:U1_undeformed}
In gauge theories constructed from $q$-deformed quantum groups, the abelian $U(1)$ factor remains classical when the quantum group symmetry is broken. Specifically, Watts \cite{Watts1997} showed that upon breaking a unified quantum symmetry $SU_q(2)$, the residual gauge group is classical $U(1)$, not its quantum counterpart. Moreover, Castellani \cite{Castellani1992} explicitly constructed the bicovariant $q$-deformation of $SU(2) \times U(1)$, where the $U(1)$ sector retains its classical commutative structure while the $SU(2)$ sector becomes deformed. These results justify our choice to keep the $U(1)$ gauge group in its classical form while introducing the $q$-deformation exclusively through the metric dependence of the covariant derivative.
\end{proposition}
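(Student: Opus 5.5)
The plan is to treat Proposition~\ref{prop:U1_undeformed} as two separate claims. The first is a structural fact about quantum groups that I would establish directly. The second is a justification of modelling choices, which follows from the first together with the results of Watts and Castellani. The structural fact is that the one-dimensional factor of a quantum group is automatically classical. To show this, I would take the Hopf algebra $\mathcal{O}(U(1))=\mathbb{C}[z,z^{-1}]$ with $\Delta z=z\otimes z$, $\varepsilon(z)=1$, $S(z)=z^{-1}$, and check the possible $q$-deformations. The natural candidate is an $R$-matrix (FRT) deformation on a single generator. For a $1\times1$ matrix $T=(z)$, the relation $R\,T_1T_2=T_2T_1\,R$ reduces to $R\,z^2=z^2R$ with $R$ a scalar, which imposes no relation. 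Hence the deformed coordinate algebra is still commutative and coincides with the classical one.

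On the dual side, I would argue that the enveloping algebra $U(\mathfrak{u}(1))$ is generated by a single primitive element $H$. Drinfeld--Jimbo deformations only modify the Serre relations and the coproducts of the root vectors $E_i,F_i$. Since $\mathfrak{u}(1)$ has no roots, the deformation of $U(\mathfrak{u}(1))$ is trivial. I expect this step to be the main obstacle. Excluding twisted or nonstandard deformations requires a cohomological statement: the relevant Hochschild/Gerstenhaber class vanishes because the Lie bialgebra structure on a one-dimensional Lie algebra is necessarily zero, $\delta:\mathfrak{u}(1)\to\wedge^2\mathfrak{u}(1)=0$. By Etingof--Kazhdan quantization, this forces the quantum group to be classical. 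I would cite this fact rather than reprove it.

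Next, I would connect to the broken-symmetry statement. Following \cite{Watts1997}, I would take $SU_q(2)$, restrict it to the Hopf subalgebra generated by the Cartan part $K=q^{H/2}$, and observe that $\Delta K=K\otimes K$. This is group-like, so the residual symmetry is a commutative and cocommutative Hopf algebra, namely the classical $U(1)$. Similarly, for \cite{Castellani1992}, I would recall that in the bicovariant calculus on $SU_q(2)\times U(1)$ the $U(1)$ generator commutes with everything and has a classical coproduct. Consequently, the $U(1)$ Maurer--Cartan forms satisfy undeformed relations.

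Finally, I would note that the last sentence of Proposition~\ref{prop:U1_undeformed} is not a mathematical assertion but a consequence of the above. Because the gauge group stays classical, any $q$-dependence in the theory must enter elsewhere. That is consistent with placing it in the metric factors of $D^{(q)}_\mu$, which is exactly what the subsequent section does. No further proof is needed for this part beyond pointing out that the transformation law \eqref{eq:gauge_transform_std} uses only the commutative group $U(1)$.
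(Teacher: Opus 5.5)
\paragraph{What the paper does.} The paper gives no argument for this proposition. It is a summary of Watts' and Castellani's results, used to motivate a modelling choice, and its only support is the citations. Your attempt to derive it has a gap at the centre.

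\paragraph{The structural lemma is true but at the wrong level.} The lemma itself is fine. Your FRT computation even shows that the paper's own $\mathcal{O}(U_q(1))$ in Definition~\ref{def:UqN} is literally the classical $\mathbb{C}[u,u^{-1}]$. For the dual side, however, Etingof--Kazhdan as you invoke it does not give the conclusion: $\delta=0$ only controls first order, and their theorem is an existence result. The elementary argument is that a one-generated algebra is commutative, and $\Delta H$ defines a one-dimensional formal group law over $\mathbb{C}[[\hbar]]$. Its logarithm then gives a primitive generator $H'\equiv H \bmod \hbar$.

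The real problem is the level at which the lemma operates. Quantum-group gauge theories in the sense of Castellani and Watts are built on a bicovariant differential calculus, and a classical Hopf algebra can carry a non-classical one. On $\mathbb{C}[z,z^{-1}]$ the adjoint coaction is trivial. Hence the ideal generated by $(z-1)(z-q)$, for $q\neq1$, defines a one-dimensional bicovariant calculus with $dz\,z=q\,z\,dz$. Hopf rigidity therefore cannot give you ``the $U(1)$ gauge sector stays classical''.

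\paragraph{The Castellani step.} Your bridge is the claim that in Castellani's calculus the $U(1)$ generator has a classical coproduct, hence undeformed Maurer--Cartan relations. That is exactly the nontrivial content, and it is asserted, not derived. As stated it is also not right. Bicovariant calculi on $SU_q(2)$ are four-dimensional. So the $U(1)$ there (the ``unified'' $SU_q(2)$ of the statement) comes from the extra generator of the calculus, not from a separate tensor factor. That generator has a twisted coproduct $\Delta\chi_0=\chi_j\otimes f^j{}_0+1\otimes\chi_0$ with $f^j{}_0\neq\delta^j_0\varepsilon$ for $q\neq1$.

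\paragraph{The Watts step.} This step proves too much. $\Delta K=K\otimes K$ holds for every $q$, whether or not anything is broken, so if that were the mechanism the result would be automatic. The residual symmetry is the stabiliser of the Higgs vacuum. With four generators, that stabiliser need not be the Hopf subalgebra generated by $K^{\pm1}$. Showing that its calculus is classical is the actual content of Watts' paper.

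\paragraph{Conclusion.} Short of reproducing those specific computations, the proposition can only be supported as the paper does, by citation. Your reading of the last sentence as a modelling justification rather than a theorem is correct.
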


\begin{definition}[Compact quantum group $U_q(N)$ and classical $U(1)$ gauge factor]
\label{def:UqN}
As shown in the literature on quantum groups \cite{KlimykSchmuedgen1997}, the compact quantum groups $U_q(N)$ and $SU_q(N)$ are defined by Hopf $*$-algebras $O(U_q(N))$ and $O(SU_q(N))$ with a fundamental unitary matrix $u = (u_j^i)$ satisfying
\begin{equation}
u^* = u^{-1}, \qquad \text{i.e.} \qquad (u^*)_j^i := (u_j^i)^* = S(u_j^i), \label{eq:unitarity_UqN}
\end{equation}
where $S$ denotes the antipode. This unitarity property is the $q$-deformed analog of the classical condition $U^\dagger = U^{-1}$ for $U(N)$.

For the special case $N = 1$, the matrix $u$ reduces to a single generator $u$ subject to
\begin{equation}
u^* = u^{-1}, \qquad \Delta(u) = u \otimes u, \label{eq:Uq1_relations}
\end{equation}
which defines the quantum group $U_q(1)$ as the $q$-deformation of the circle group $U(1)$. The algebra $O(U_q(1))$ is generated by $u$ and $u^{-1}$ with the relation $u^* = u^{-1}$, and it reduces to the classical algebra of functions on $U(1)$ in the limit $q \to 1$.

In the present work, however, we do not deform the gauge group itself. Instead, we keep the gauge group $U(1)$ classical and introduce the $q$-deformation through the spacetime metric via
\begin{equation}
q_\mu := \sqrt{|g^{\mu\mu}|}, \qquad \text{(no sum over } \mu\text{)}. \label{eq:q_metric_definition}
\end{equation}
This approach is justified by the observation that in $q$-deformed gauge theories, the $U(1)$ factor typically remains classical \cite{Castellani1992, Watts1997}.
\end{definition}

\subsection{Deformed covariant derivative}

We now extend the minimal coupling prescription to the metric-deformed setting. Recall the $q$-Dirac operator from \eqref{eq:Dq_definition}, which encodes the free dynamics in a constant diagonal metric background.

\begin{definition}[Metric-deformed covariant derivative]
\label{def:deformed_covariant}
Let \(A_\mu(x) = A_\mu^a(x) T^a\) be a gauge field taking values in the Lie algebra \(\mathfrak{g}\) of a compact Lie group \(G\) (e.g., \(G = U(1)\) for electromagnetism) \cite{Cacic2023, Schwartz2013}. We define the \textbf{metric-deformed covariant derivative} as
\begin{equation}\label{eq:deformed_covariant_def}
D_\mu^{(q)} := \partial_\mu + \frac{ie}{\sqrt{|g^{\mu\mu}(x)|}} A_\mu(x).
\end{equation}

The factor \((\sqrt{|g^{\mu\mu}(x)|})^{-1}\) ensures that \(D_\mu^{(q)}\) has the correct dimension of (length)\(^{-1}\) and that in the undeformed limit \(g^{\mu\mu} \to \eta^{\mu\mu}\) we recover the ordinary covariant derivative \(D_\mu = \partial_\mu + ieA_\mu\).
\end{definition}

For convenience, we introduce the shorthand notation
\begin{equation}\label{eq:h_mu_def}
h_\mu(x) := \frac{1}{\sqrt{|g^{\mu\mu}(x)|}}, \qquad \text{(no sum over }\mu\text{)}.
\end{equation}

Then \eqref{eq:deformed_covariant_def} becomes
\begin{equation}
D_\mu^{(q)} = \partial_\mu + ie\, h_\mu(x)\, A_\mu(x), \qquad \text{(no sum over }\mu\text{)}. \label{eq:deformed_covariant_h}
\end{equation}

\begin{proposition}[Gauge transformation of \(D_\mu^{(q)}\)]
\label{prop:gauge_transform}
Under a local gauge transformation \(U(x) \in G\), the gauge field transforms as
\begin{equation}
A_\mu \to U A_\mu U^{-1} + U (\partial_\mu U^{-1}), \label{eq:gauge_transform_A}
\end{equation}
while the metric components \(g^{\mu\mu}(x)\) are taken to be \textbf{gauge-invariant background fields} (they do not transform). Then \(D_\mu^{(q)}\) transforms covariantly:
\begin{equation}
D_\mu^{(q)} \to U D_\mu^{(q)} U^{-1}. \label{eq:covariant_transform_def}
\end{equation}
\end{proposition}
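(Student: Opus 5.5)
The plan is a direct operator computation. I will act with both sides of \eqref{eq:covariant_transform_def} on an arbitrary test field $\psi$ (a section in the representation of $G$ on which $U(x)$ acts), and compare the homogeneous and inhomogeneous pieces separately. Throughout I use the shorthand \eqref{eq:h_mu_def}, $h_\mu = 1/\sqrt{|g^{\mu\mu}|}$, with no sum over $\mu$. The key structural input is the hypothesis of the proposition that $g^{\mu\mu}(x)$ is a gauge-invariant background. Hence $h_\mu$ is an ordinary real scalar function: it commutes with $U$, $U^{-1}$ and the Lie-algebra generators $T^a$, and it is untouched by the transformation.

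\textbf{Step 1 (conjugated operator).} Expand the conjugated operator with the Leibniz rule, $\partial_\mu(U^{-1}\psi) = (\partial_\mu U^{-1})\psi + U^{-1}\partial_\mu\psi$. This gives
\[
U D_\mu^{(q)} U^{-1}\psi = \partial_\mu\psi + U(\partial_\mu U^{-1})\psi + ie\,h_\mu\, U A_\mu U^{-1}\psi ,
\]
where I have pulled $h_\mu$ through $U$ using its scalar, gauge-inert character.

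\textbf{Step 2 (transformed operator).} Substitute the law \eqref{eq:gauge_transform_A} into \eqref{eq:deformed_covariant_h}:
\[
D_\mu^{(q)\prime}\psi = \partial_\mu\psi + ie\,h_\mu\, U A_\mu U^{-1}\psi + ie\,h_\mu\, U(\partial_\mu U^{-1})\psi .
\]

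\textbf{Step 3 (compare).} The derivative terms agree. The homogeneous terms $ie\,h_\mu UA_\mu U^{-1}$ also agree, and this is exactly where gauge invariance of the metric is used: had $g^{\mu\mu}$ transformed, $h_\mu$ would not factor out. What remains is the difference
\[
(ie\,h_\mu - 1)\,U(\partial_\mu U^{-1}) ,
\]
which must vanish.

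\textbf{Main obstacle.} Showing that this residual term vanishes is the step I expect to be hardest. As written, \eqref{eq:gauge_transform_A} carries no factor $1/(ie\,h_\mu)$, so the inhomogeneous terms match only if the convention of \eqref{eq:gauge_transform_A} already absorbs the coupling and metric normalization into the Lie-algebra-valued field. To close the gap I would first check the abelian case against \eqref{eq:gauge_transform_std}, taking $U = e^{ie\alpha}$ so that $U\partial_\mu U^{-1} = -ie\,\partial_\mu\alpha$. This pins down which normalization of $A_\mu$ makes the stated law consistent. I would then record that normalization explicitly, namely that the inhomogeneous term of \eqref{eq:gauge_transform_A} is to be read with the factor $h_\mu$ and the charge already absorbed. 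With that reading the residual cancels identically, and \eqref{eq:covariant_transform_def} follows for arbitrary $\psi$.

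If no such reading is adopted, the computation above shows precisely which term obstructs covariance, and therefore what the proposition requires.
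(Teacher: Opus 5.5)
Your Steps 1--3 are the ``direct substitution'' the paper's one-line proof invokes, and your computation is right: $D_\mu^{(q)\prime}-UD_\mu^{(q)}U^{-1}=(ie\,h_\mu-1)\,U(\partial_\mu U^{-1})$. The gap is in your attempt to close it, and behind it lies a defect of the statement itself. The paper's proof never confronts this, because it appeals only to gauge invariance of $h_\mu$ and to the ``analogous'' standard case. Since $e$ and $h_\mu>0$ are real, $ie\,h_\mu\neq1$. So the residual is nonzero for every $U$ that genuinely depends on $x^\mu$, and the proposition is false with \eqref{eq:gauge_transform_A} as written.

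Your planned abelian check does not repair this. With $U=e^{ie\alpha}$ and the paper's law \eqref{eq:gauge_transform_std}, $A_\mu\to A_\mu-\partial_\mu\alpha$, one gets
\[
D_\mu^{(q)\prime}\bigl(e^{ie\alpha}\psi\bigr)=e^{ie\alpha}\Bigl(D_\mu^{(q)}\psi+ie\,(1-h_\mu)(\partial_\mu\alpha)\,\psi\Bigr).
\]
So even the standard normalization fails as soon as $h_\mu\neq1$, including for a constant metric. (That normalization also differs from \eqref{eq:gauge_transform_A} by the factor $ie$.) The check fixes the charge normalization but says nothing about $h_\mu$.

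Covariance holds if and only if
\[
A_\mu\to UA_\mu U^{-1}+\frac{1}{ie\,h_\mu}\,U(\partial_\mu U^{-1}),\qquad\text{abelian case: } A_\mu\to A_\mu-\sqrt{|g^{\mu\mu}|}\,\partial_\mu\alpha ,
\]
i.e.\ it is $h_\mu A_\mu$, not $A_\mu$, that must transform as an ordinary connection. This is a different, metric-dependent transformation law, not a ``reading'' of \eqref{eq:gauge_transform_A}. Gauge invariance of $g^{\mu\mu}$ is necessary, since it handles the homogeneous term as you note, but it is not sufficient. Your last paragraph therefore proves a corrected proposition, not the stated one.

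The honest conclusion is your fallback sentence. Present the computation as showing that the stated law obstructs covariance by exactly $(ie\,h_\mu-1)U(\partial_\mu U^{-1})$. Then say the hypothesis must be replaced by the $h_\mu$-weighted law above, under which your Steps 1--3 close at once. The same correction is needed wherever the paper reuses \eqref{eq:gauge_transform_A}, for instance in the gauge-invariance proposition for the deformed Yang--Mills action.
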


\begin{proof}
The proof follows by direct substitution, using the fact that \(h_\mu(x)\) is gauge-invariant. For the abelian case \(U(1)\), a detailed verification is analogous to the standard case shown in \eqref{eq:covariant_transform_std}.
\end{proof}

\subsection{Deformed field strength}
\label{sec:fieldstrength}

\begin{proposition}[Deformed field strength]
\label{prop:deformed_field_strength}
The field strength associated with the deformed covariant derivative \(D_\mu^{(q)} = \partial_\mu + ie h_\mu A_\mu\) is given by the commutator
\begin{equation}
F_{\mu\nu}^{(q)} := \left[ D_\mu^{(q)}, D_\nu^{(q)} \right]. \label{eq:field_strength_def}
\end{equation}
Explicitly, in terms of the gauge field \(A_\mu\) and the metric-dependent function \(h_\mu(x) = 1/\sqrt{|g^{\mu\mu}(x)|}\) (no sum over \(\mu\)), we have
\begin{equation}
\boxed{\; F_{\mu\nu}^{(q)} = ie\left( \partial_\mu(h_\nu A_\nu) - \partial_\nu(h_\mu A_\mu) \right) - e^2 h_\mu h_\nu [A_\mu, A_\nu] \; }. \label{eq:F_final}
\end{equation}
\end{proposition}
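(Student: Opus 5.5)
The plan is to compute the commutator $[D_\mu^{(q)}, D_\nu^{(q)}]$ directly, treating it as an operator acting on an arbitrary test section $\psi$. I will abbreviate $B_\mu := h_\mu A_\mu$ (no sum), which is again a $\mathfrak{g}$-valued one-form component, so that $D_\mu^{(q)} = \partial_\mu + ie B_\mu$. This makes the computation formally identical to the standard one in \eqref{eq:field_strength_std}, with $A_\mu$ replaced by $B_\mu$. The only inputs are the Leibniz rule for $\partial_\mu$, the symmetry $\partial_\mu\partial_\nu = \partial_\nu\partial_\mu$ on smooth functions, and the fact that $h_\mu(x)$ is a real scalar function. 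The last point means $h_\mu$ commutes with the matrices $T^a$ and with $A_\nu$.

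The first step is to expand
\[
D_\mu^{(q)} D_\nu^{(q)}\psi = \partial_\mu\partial_\nu\psi + ie\,\partial_\mu(B_\nu\psi) + ie\,B_\mu\partial_\nu\psi - e^2 B_\mu B_\nu\psi .
\]
The Leibniz rule gives $\partial_\mu(B_\nu\psi) = (\partial_\mu B_\nu)\psi + B_\nu\partial_\mu\psi$.

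The second step is to antisymmetrize in $\mu\leftrightarrow\nu$. The second-derivative terms cancel by symmetry of mixed partials. The first-derivative terms $ie(B_\nu\partial_\mu\psi + B_\mu\partial_\nu\psi)$ are symmetric and cancel. What remains is
\[
ie\bigl(\partial_\mu B_\nu - \partial_\nu B_\mu\bigr)\psi - e^2[B_\mu,B_\nu]\psi .
\]
Since $\psi$ is arbitrary, this is a multiplication operator, as the statement asserts.

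The final step is to substitute $B_\mu = h_\mu A_\mu$ back. The derivative term becomes $ie(\partial_\mu(h_\nu A_\nu) - \partial_\nu(h_\mu A_\mu))$. I will leave it in this product form rather than expanding it into $h_\nu\partial_\mu A_\nu + (\partial_\mu h_\nu)A_\nu - \dots$, since the boxed formula keeps it unexpanded. Expanding would also show the extra metric-gradient terms mentioned in the abstract, which vanish for constant $g^{\mu\mu}$. For the quadratic term, I use that $h_\mu, h_\nu$ are scalars to pull them out: $[h_\mu A_\mu, h_\nu A_\nu] = h_\mu h_\nu [A_\mu, A_\nu]$. This yields \eqref{eq:F_final}.

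There is no real obstacle here. The only point needing care is the bookkeeping of the no-sum convention: indices $\mu,\nu$ are fixed labels throughout and never summed, which the commutator computation respects automatically. I will also remark that in the abelian case $[A_\mu,A_\nu]=0$, and \eqref{eq:F_final} reduces to \eqref{eq:field_strength_std} when $h_\mu\equiv 1$.
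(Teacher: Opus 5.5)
Your proposal is correct and follows essentially the same route as the paper: you expand the commutator directly with the Leibniz rule, cancel the second-derivative and first-derivative terms, and pull the scalar factors $h_\mu h_\nu$ out of the quadratic commutator. The only cosmetic difference is that you act on a test section $\psi$ with the shorthand $B_\mu = h_\mu A_\mu$, which makes explicit the operator identity $[\partial_\mu, h_\nu A_\nu] = \partial_\mu(h_\nu A_\nu)$ that the paper's items (i)--(ii) use implicitly.
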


\begin{proof}
We compute the commutator explicitly using \(D_\mu^{(q)} = \partial_\mu + ie h_\mu A_\mu\):
\begin{align*}
F_{\mu\nu}^{(q)} &= \left[ \partial_\mu + ie h_\mu A_\mu,\; \partial_\nu + ie h_\nu A_\nu \right] \\
&= \underbrace{[\partial_\mu,\partial_\nu]}_{0} 
+ ie [\partial_\mu, h_\nu A_\nu] 
+ ie [h_\mu A_\mu, \partial_\nu] 
- e^2 [h_\mu A_\mu, h_\nu A_\nu].
\end{align*}

The individual commutators are evaluated as follows:
\begin{align}
[\partial_\mu,\, h_\nu A_\nu] &= (\partial_\mu h_\nu) A_\nu + h_\nu (\partial_\mu A_\nu), \tag{i}\\
[h_\mu A_\mu,\, \partial_\nu] &= -(\partial_\nu h_\mu) A_\mu - h_\mu (\partial_\nu A_\mu), \tag{ii}\\
[h_\mu A_\mu,\, h_\nu A_\nu] &= h_\mu h_\nu [A_\mu, A_\nu]. \tag{iii}
\end{align}

Substituting (i)–(iii) into the expression for \(F_{\mu\nu}^{(q)}\), we obtain
\[
F_{\mu\nu}^{(q)} = ie\Big[ (\partial_\mu h_\nu)A_\nu + h_\nu \partial_\mu A_\nu - (\partial_\nu h_\mu)A_\mu - h_\mu \partial_\nu A_\mu \Big] - e^2 h_\mu h_\nu [A_\mu, A_\nu].
\]

Rearranging the terms yields the compact form
\[
F_{\mu\nu}^{(q)} = ie\left( \partial_\mu(h_\nu A_\nu) - \partial_\nu(h_\mu A_\mu) \right) - e^2 h_\mu h_\nu [A_\mu, A_\nu].
\]
This completes the proof. \hfill \(\square\)
\end{proof}

\begin{corollary}[Abelian case]
For an abelian gauge group such as \(U(1)\) (quantum electrodynamics), the commutator \([A_\mu, A_\nu]\) vanishes identically, and the field strength reduces to
\begin{equation}
F_{\mu\nu}^{(q)} = ie\left( \partial_\mu(h_\nu A_\nu) - \partial_\nu(h_\mu A_\mu) \right). \label{eq:F_abelian}
\end{equation}
\end{corollary}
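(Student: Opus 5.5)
The corollary should follow directly from Proposition~\ref{prop:deformed_field_strength}, so I will specialize the boxed formula \eqref{eq:F_final} rather than recompute the commutator. For an abelian gauge group such as $U(1)$ the Lie algebra is one-dimensional and commutative. Hence $A_\mu(x) = A_\mu^a(x)T^a$ reduces to a scalar-valued function times a single generator, which can be absorbed into $A_\mu$ itself. Consequently $[A_\mu, A_\nu] = 0$ pointwise for all $\mu,\nu$.

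I would then note that the prefactor $h_\mu h_\nu$ multiplying this commutator in \eqref{eq:F_final} is an ordinary real-valued function of $x$. It is gauge-invariant and contains no operator content. Thus $-e^2 h_\mu h_\nu [A_\mu,A_\nu]$ vanishes identically, with no extra conditions on the background metric $g^{\mu\mu}(x)$ beyond $g^{\mu\mu}\neq 0$, which is needed for $h_\mu$ to be defined.

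One point needs care: in \eqref{eq:F_final} the symbol $[A_\mu,A_\nu]$ means the commutator of the multiplication operators $A_\mu(x)$ and $A_\nu(x)$ acting on $\psi$, not a commutator involving derivatives. As multiplication operators, commuting scalar (or $\mathfrak{u}(1)$-valued) functions commute, so the term drops out. The derivative terms from (i) and (ii) in the proof of Proposition~\ref{prop:deformed_field_strength} are unaffected and remain as $ie(\partial_\mu(h_\nu A_\nu)-\partial_\nu(h_\mu A_\mu))$, which gives \eqref{eq:F_abelian}.

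As a consistency check, setting $h_\mu\equiv 1$ (i.e.\ $g^{\mu\mu}=\eta^{\mu\mu}$) should reproduce \eqref{eq:field_strength_std}. I do not expect a real obstacle here; the only step requiring attention is the interpretation of the commutator as acting by multiplication.
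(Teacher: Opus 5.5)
Your proposal is correct and follows the paper's route: the corollary is obtained by specializing \eqref{eq:F_final} and dropping the term $-e^2 h_\mu h_\nu [A_\mu,A_\nu]$, because the $h_\mu$ are scalar functions and the potentials of an abelian group commute pointwise. One small imprecision is that the argument uses only commutativity of the Lie algebra, not one-dimensionality, so it applies equally to any abelian group such as $U(1)^n$.
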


Several remarks are in order:

\begin{remark}[Constant metric limit]
In the limit of a constant metric, \(h_\mu\) is constant and can be pulled out of the derivatives. Then \eqref{eq:F_abelian} becomes
\[
F_{\mu\nu}^{(q)} = ie\, h_\mu h_\nu (\partial_\mu A_\nu - \partial_\nu A_\mu),
\]
which reproduces the standard Maxwell field strength up to a constant rescaling.
\end{remark}

\begin{remark}[Varying metric]
If the metric varies in spacetime (i.e., $\partial_\mu h_\nu \neq 0$), new terms proportional to derivatives of the metric appear. These are the hallmark of our deformed gauge theory.
\end{remark}

\begin{remark}[No summation convention]
The factors \(h_\mu\) and \(h_\nu\) carry the same indices as the derivatives \(\partial_\mu\), \(\partial_\nu\) and the potentials \(A_\mu\), \(A_\nu\), respectively. There is \emph{no summation} over repeated indices unless explicitly indicated.
\end{remark}

\subsection{Gauge-deformed Dirac equation}

With the deformed covariant derivative at hand, we can now write the gauge-deformed Dirac equation.

\begin{definition}[$q$-Dirac gauge equation]
The gauge-deformed Dirac equation for a fermion field \(\psi\) in the fundamental representation is
\begin{equation}\label{eq:q_dirac_gauge}
\left( i \gamma^\mu D_\mu^{(q)} - m \right) \psi = 0.
\end{equation}
\end{definition}

Expanding \(D_\mu^{(q)}\) using \eqref{eq:deformed_covariant_def}, we obtain the explicit form
\[
\left[ i \gamma^\mu\sqrt{|g^{\mu\mu}|} \left(\partial_\mu + ie \frac{1}{\sqrt{|g^{\mu\mu}(x)|}} A_\mu(x)\right) - m \right] \psi(x) = 0.
\]

Notice that the factors \(\sqrt{|g^{\mu\mu}|}\) cancel in the interaction term, yielding the simpler expression
\begin{equation}
\left( i \gamma^\mu \sqrt{|g^{\mu\mu}|}\,\partial_\mu - e \gamma^\mu A_\mu - m \right) \psi = 0. \label{eq:q_dirac_simplified}
\end{equation}

\begin{remark}
In the undeformed limit \(g^{\mu\mu} \to \eta^{\mu\mu}\), we have \(\sqrt{|g^{\mu\mu}|} \to 1\), and \eqref{eq:q_dirac_simplified} reduces precisely to the standard Dirac equation in an external electromagnetic field \eqref{eq:dirac_em}. Thus our construction consistently generalizes the ordinary minimal coupling prescription.
\end{remark}

\section{Table of $q$-gauge Dirac equation and deformed field strength from deformed Heisenberg algebras}
\label{sec:table-gauge-field}

In this section we present explicit realizations of the $q$-gauge Dirac equation and the associated deformed field strength obtained from the metric-deformed Heisenberg algebras $M_1$ and $M_2$. For each case, the metric components $g^{\mu\mu}$ are constant, which implies that $h_\mu = 1/\sqrt{|g^{\mu\mu}|}$ are constants. Consequently, the deformed field strength for the abelian $U(1)$ gauge theory reduces to a constant rescaling of the standard Maxwell tensor:
\begin{equation}\label{eq:scaled_field_strength}
F_{\mu\nu}^{(q)} = ie\, h_\mu h_\nu \bigl(\partial_\mu A_\nu - \partial_\nu A_\mu\bigr).
\end{equation}
Similarly, the $q$-Dirac gauge equation (24) becomes
\begin{equation}\label{eq:scaled_dirac_eq}
\bigl( i \gamma^\mu \alpha_\mu \partial_\mu - e \gamma^\mu A_\mu - m \bigr)\psi = 0,
\qquad \alpha_\mu = \sqrt{|g^{\mu\mu}|}.
\end{equation}
The following tables list the relevant data for each algebra, as derived from the metric-deformed framework established in \cite{Jaramillo2026a}.

\subsection{New $q$-Heisenberg algebra}
For the new $q$-Heisenberg algebra introduced in \cite{Jaramillo2025}, we consider its three defining relations in the algebras $M_1$ and $M_2$.

\begin{table}[H]
\centering\small
\caption{Metric components and $q$-gauge Dirac equation from the first relation (\ref{eq:new1}) of the new $q$-Heisenberg algebra.}
\label{tab:new1-dirac}
\begin{tiny}
\begin{tabular}{@{}cccccc@{}}
\toprule
Algebra & $\alpha$ & $\beta$ & $(g^{00},g^{11},g^{22},g^{33})$ & \multicolumn{2}{c}{$q$-gauge Dirac equation} \\
\midrule
$M_1$ & 1 & 1 & $(1,\,-q^{-n},\,q^{n-1}\Psi,\,0)$ & \multicolumn{2}{c}{$\bigl( i\gamma^{0}\partial_t - i\gamma^{x} q^{-n/2}\partial_x - i\gamma^{y} q^{(n-1)/2}\Psi^{1/2}\partial_y - e\gamma^{\mu}A_{\mu} - m \bigr)\psi = 0$} \\
$M_2$ & 1 & 2 & $(1,\,0,\,0,\,-q^{n})$ & \multicolumn{2}{c}{$\bigl( i\gamma^{0}\partial_t - i\gamma^{z} q^{n/2}\partial_z - e\gamma^{\mu}A_{\mu} - m \bigr)\psi = 0$} \\
$M_2$ & 1 & 3 & $(1,\,0,\,-q^{n},\,0)$ & \multicolumn{2}{c}{$\bigl( i\gamma^{0}\partial_t - i q^{n/2}\gamma^{y}\partial_y - e\gamma^{\mu}A_{\mu} - m \bigr)\psi = 0$} \\
$M_2$ & 2 & 1 & $(1,\,0,\,0,\,q^{-n})$ & \multicolumn{2}{c}{$\bigl( i\gamma^{0}\partial_t - i q^{n/2}\gamma^{z}\partial_z - e\gamma^{\mu}A_{\mu} - m \bigr)\psi = 0$} \\
$M_1$ & 2 & 2 & $(1,\,0,\,-q^{n},\,q^{n-1}\Psi)$ & \multicolumn{2}{c}{$\bigl( i\gamma^{0}\partial_t - i\gamma^{y} q^{n/2}\partial_y - i\gamma^{z} q^{(n-1)/2}\Psi^{1/2}\partial_z - e\gamma^{\mu}A_{\mu} - m \bigr)\psi = 0$} \\
$M_2$ & 2 & 3 & $(1,\,-q^{n},\,0,\,0)$ & \multicolumn{2}{c}{$\bigl( i\gamma^{0}\partial_t - i q^{n/2}\gamma^{x}\partial_x - e\gamma^{\mu}A_{\mu} - m \bigr)\psi = 0$} \\
\bottomrule
\end{tabular}
\end{tiny}
\end{table}
\begin{table}[H]
\centering\small
\caption{Metric components and $q$-gauge Dirac equation from the second relation (\ref{eq:new2}) of the new $q$-Heisenberg algebra, associated to $M_1$.}
\label{tab:new2-dirac-m1}
\begin{tabular}{@{}cccccc@{}}
\toprule
$\alpha$ & $\lambda$ & $(g^{00},g^{11},g^{22},g^{33})$ & \multicolumn{2}{c}{$q$-gauge Dirac equation} \\
\midrule
1 & 2 & $(0,\,0,\,q^{m},\,1)$ & \multicolumn{2}{c}{$\bigl( -i\gamma^{y} q^{m/2}\partial_y - i\gamma^{z}\partial_z - e\gamma^{\mu}A_{\mu} - m \bigr)\psi = 0$} \\
1 & 3 & $(0,\,-1,\,-q^{m},\,0)$ & \multicolumn{2}{c}{$\bigl( -i\gamma^{x}\partial_x - i\gamma^{y} q^{m/2}\partial_y - e\gamma^{\mu}A_{\mu} - m \bigr)\psi = 0$} \\
2 & 1 & $(0,\,0,\,-1,\,-q^{m})$ & \multicolumn{2}{c}{$\bigl( -i\gamma^{y}\partial_y - i q^{m/2}\gamma^{z}\partial_z - e\gamma^{\mu}A_{\mu} - m \bigr)\psi = 0$} \\
2 & 3 & $(0,\,1,\,0,\,q^{m})$ & \multicolumn{2}{c}{$\bigl( -i\gamma^{x}\partial_x - i q^{m/2}\gamma^{z}\partial_z - e\gamma^{\mu}A_{\mu} - m \bigr)\psi = 0$} \\
3 & 1 & $(0,\,q^{m},\,1,\,0)$ & \multicolumn{2}{c}{$\bigl( -i q^{m/2}\gamma^{x}\partial_x - i\gamma^{y}\partial_y - e\gamma^{\mu}A_{\mu} - m \bigr)\psi = 0$} \\
3 & 2 & $(0,\,-q^{m},\,0,\,-1)$ & \multicolumn{2}{c}{$\bigl( -i\gamma^{x} q^{m/2}\partial_x - i\gamma^{z}\partial_z - e\gamma^{\mu}A_{\mu} - m \bigr)\psi = 0$} \\
\bottomrule
\end{tabular}
\end{table}

\begin{table}[H]
\centering\small
\caption{Metric components and $q$-gauge Dirac equation from the second relation (\ref{eq:new2}) of the new $q$-Heisenberg algebra, associated to $M_2$.}
\label{tab:new2-dirac-m2}
\begin{tabular}{@{}cccccc@{}}
\toprule
$\alpha$ & $\lambda$ & $(g^{00},g^{11},g^{22},g^{33})$ & \multicolumn{2}{c}{$q$-gauge Dirac equation} \\
\midrule
1 & 2 & $(q^{m},\,0,\,0,\,-1)$ & \multicolumn{2}{c}{$\bigl( -i\gamma^{0} q^{m/2}\partial_t - i\gamma^{z}\partial_z - e\gamma^{\mu}A_{\mu} - m \bigr)\psi = 0$} \\
1 & 3 & $(q^{m},\,0,\,-1,\,0)$ & \multicolumn{2}{c}{$\bigl( i\gamma^{0} q^{m/2}\partial_t - i\gamma^{y}\partial_y - e\gamma^{\mu}A_{\mu} - m \bigr)\psi = 0$} \\
2 & 1 & $(1,\,0,\,0,\,q^{m})$ & \multicolumn{2}{c}{$\bigl( i\gamma^{0}\partial_t - i q^{m/2}\gamma^{z}\partial_z - e\gamma^{\mu}A_{\mu} - m \bigr)\psi = 0$} \\
2 & 3 & $(q^{m},\,-1,\,0,\,0)$ & \multicolumn{2}{c}{$\bigl( i\gamma^{0} q^{m/2}\partial_t - i\gamma^{x}\partial_x - e\gamma^{\mu}A_{\mu} - m \bigr)\psi = 0$} \\
3 & 1 & $(-1,\,0,\,q^{m},\,0)$ & \multicolumn{2}{c}{$\bigl( i\gamma^{0}\partial_t - i\gamma^{y} q^{m/2}\partial_y - e\gamma^{\mu}A_{\mu} - m \bigr)\psi = 0$} \\
3 & 2 & $(-1,\,q^{m},\,0,\,0)$ & \multicolumn{2}{c}{$\bigl( i\gamma^{0}\partial_t - i\gamma^{x} q^{m/2}\partial_z - e\gamma^{\mu}A_{\mu} - m \bigr)\psi = 0$} \\
\bottomrule
\end{tabular}
\end{table}

\subsection{$q$-generalized Heisenberg algebra}
For the $q$-generalized Heisenberg algebra studied in \cite{Razavinia-Lopes2022} and embedded into the metric-deformed framework in \cite{Jaramillo2026a}, the metric components are:
\[
g^{00}=-1,\quad g^{11}=1,\quad g^{22}=0,\quad g^{33}=-q.
\]
This leads to the $q$-gauge Dirac equation
\[
\bigl( i\gamma^{0}\partial_t - i\gamma^{x}\partial_x - i\gamma^{z}\sqrt{q}\,\partial_z - e\gamma^{\mu}A_{\mu} - m \bigr)\psi = 0,
\]
with no $y$-dependence due to $g^{22}=0$.

\subsection{$q$-$\hbar$ Heisenberg algebra}
The $q$-$\hbar$ Heisenberg algebra \cite{Volovich-Arefeva-91} is characterized by the metric components listed in Table~\ref{tab:qhbar-metrics} (see also \cite{Jaramillo2026a}).

\begin{table}[H]
\centering\small
\caption{Metric components and $q$-gauge Dirac operators for the $q$-$\hbar$-deformed Heisenberg algebra.}
\label{tab:qhbar-metrics}
\begin{tabular}{@{}cccccc@{}}
\toprule
$j$ & $k$ & $(g^{00},g^{11},g^{22},g^{33})$ & \multicolumn{2}{c}{$q$-gauge Dirac equation} \\
\midrule
1 & 1 & $(-q,\,1,\,q^{1/2},\,0)$ & \multicolumn{2}{c}{$\bigl( i\gamma^{0}\sqrt{q}\,\partial_t - i\gamma^{x}\partial_x - i\gamma^{y} q^{1/4}\partial_y - e\gamma^{\mu}A_{\mu} - m \bigr)\psi = 0$} \\
2 & 2 & $(-q,\,1,\,0,\,q^{1/2})$ & \multicolumn{2}{c}{$\bigl( i\gamma^{0}\sqrt{q}\,\partial_t - i\gamma^{x}\partial_x - i\gamma^{z} q^{1/4}\partial_z - e\gamma^{\mu}A_{\mu} - m \bigr)\psi = 0$} \\
3 & 3 & $(-q,\,q^{1/2},\,0,\,1)$ & \multicolumn{2}{c}{$\bigl( i\gamma^{0}\sqrt{q}\,\partial_t - i q^{1/4}\gamma^{x}\partial_x - i\gamma^{z}\partial_z - e\gamma^{\mu}A_{\mu} - m \bigr)\psi = 0$} \\
\bottomrule
\end{tabular}
\end{table}

In all cases, the deformed field strength is given by equation \eqref{eq:scaled_field_strength}. The constant factors can be absorbed into a redefinition of the gauge field, confirming that the $U(1)$ sector remains classical, in agreement with the observations of Castellani \cite{Castellani1992} and Watts \cite{Watts1997}. These examples demonstrate how the metric-deformed formalism unifies various known $q$-deformed Heisenberg algebras through the common geometric data $g^{\mu\mu}$.

\subsection{Explicit form of the deformed electromagnetic tensor \(F_{\mu\nu}^{(q)}\) for constant metric backgrounds}

For constant metric backgrounds, the deformed field strength reduces to
\[
F_{\mu\nu}^{(q)} = ie\, h_\mu h_\nu \bigl(\partial_\mu A_\nu - \partial_\nu A_\mu\bigr), \qquad 
h_\mu = \frac{1}{\sqrt{|g^{\mu\mu}|}} \quad (\text{only for } g^{\mu\mu}\neq 0).
\]
If a metric component \(g^{\mu\mu}=0\), the corresponding coordinate is not present in the deformed theory; the gauge field component \(A_\mu\) and its derivatives are set to zero, and no contribution to \(F_{\mu\nu}^{(q)}\) appears. Consequently, the effective spacetime dimension reduces. We illustrate this with four concrete examples.

\subsubsection*{Example 1: New \(q\)-Heisenberg, algebra \(M_1\), \((\alpha,\beta)=(1,1)\)}
Metric: \((g^{00},g^{11},g^{22},g^{33}) = (1,\,-q^{-n},\,q^{n-1}\Psi,\,0)\).  
Here \(g^{33}=0\), so the \(z\)-direction is absent. The non‑zero scale factors are
\[
h_0 = 1,\qquad h_x = q^{n/2},\qquad h_y = q^{-(n-1)/2}\Psi^{-1/2},
\]
and only the \((t,x,y)\) subspace is active. The deformed Maxwell tensor in \(4\times4\) form (with zeros in the \(z\) row/column) is:
\[
F_{\mu\nu}^{(q)} = ie\,
\begin{pmatrix}
0 & h_0 h_x F_{0x} & h_0 h_y F_{0y} & 0 \\
-h_0 h_x F_{0x} & 0 & h_x h_y F_{xy} & 0 \\
-h_0 h_y F_{0y} & -h_x h_y F_{xy} & 0 & 0 \\
0 & 0 & 0 & 0
\end{pmatrix},
\]
with \(F_{\mu\nu}=\partial_\mu A_\nu-\partial_\nu A_\mu\).

\subsubsection*{Example 2: New \(q\)-Heisenberg, algebra \(M_2\), \((\alpha,\beta)=(1,2)\)}
Metric: \((g^{00},g^{11},g^{22},g^{33}) = (1,\,0,\,0,\,-q^{n})\).  
Here \(g^{11}=g^{22}=0\), so only the \((t,z)\) plane remains. The scale factors are \(h_0=1\), \(h_z=q^{-n/2}\). The matrix is:
\[
F_{\mu\nu}^{(q)} = ie\,
\begin{pmatrix}
0 & 0 & 0 & h_0 h_z F_{0z} \\
0 & 0 & 0 & 0 \\
0 & 0 & 0 & 0 \\
-h_0 h_z F_{0z} & 0 & 0 & 0
\end{pmatrix}.
\]

\subsubsection*{Example 3: \(q\)-generalized Heisenberg algebra}
Metric: \((g^{00},g^{11},g^{22},g^{33}) = (-1,\,1,\,0,\,-q)\).  
Here \(g^{22}=0\) (no \(y\)). Active coordinates: \(t,x,z\) with scale factors
\(h_0=1\), \(h_x=1\), \(h_z=q^{-1/2}\). The field strength is:
\[
F_{\mu\nu}^{(q)} = ie\,
\begin{pmatrix}
0 & F_{0x} & 0 & q^{-1/2}F_{0z} \\
-F_{0x} & 0 & 0 & q^{-1/2}F_{xz} \\
0 & 0 & 0 & 0 \\
-q^{-1/2}F_{0z} & -q^{-1/2}F_{xz} & 0 & 0
\end{pmatrix}.
\]

\subsubsection*{Example 4: \(q\)-\(\hbar\) Heisenberg algebra, case \(j=k=1\)}
Metric: \((-q,\,1,\,q^{1/2},\,0)\).  
Here \(g^{33}=0\) (no \(z\)). Active coordinates: \(t,x,y\) with
\(h_0 = q^{-1/2}\), \(h_x = 1\), \(h_y = q^{-1/4}\). The matrix is:
\[
F_{\mu\nu}^{(q)} = ie\,
\begin{pmatrix}
0 & q^{-1/2}F_{0x} & q^{-3/4}F_{0y} & 0 \\
-q^{-1/2}F_{0x} & 0 & q^{-1/4}F_{xy} & 0 \\
-q^{-3/4}F_{0y} & -q^{-1/4}F_{xy} & 0 & 0 \\
0 & 0 & 0 & 0
\end{pmatrix}.
\]

In all cases, the constant factors \(h_\mu\) can be absorbed into a redefinition of the gauge field, leaving the standard Maxwell structure. This confirms that the \(U(1)\) gauge sector remains classical, in agreement with Castellani \cite{Castellani1992} and Watts \cite{Watts1997}. The appearance of zero metric components leads to a reduction of the effective spacetime dimension, a phenomenon that may have interesting physical implications \cite{vanNuland2021}.

\section{Deformed gauge theories}
\label{sec:gaugeactions}

Having introduced the deformed covariant derivative and the associated field strength, we now construct gauge-invariant actions for the deformed gauge sector and for fermionic matter minimally coupled to the operator \(D_\mu^{(q)}\).

The study of deformations of gauge theories has a long history in mathematical physics. General geometric approaches to gauge deformations were developed by Anco \cite{Anco2004}, who showed that broad classes of nonlinear gauge theories admit formulations in terms of generalized covariant derivatives and curvature operators. Noncommutative deformations based on Drinfel'd twists and quantum symmetries have also been extensively investigated; see for example Borsato and Meier \cite{BorsatoMeier2025}. In the particular case of deformed \(U(1)\) gauge theories, Kupriyanov, Kurkov and Vitale \cite{Kupriyanov2020} constructed gauge-covariant noncommutative field strengths together with consistent Yang--Mills actions reproducing Maxwell theory in the commutative limit. More recently, Hayata, Hidaka and Watanabe \cite{Hayata2026} analyzed \(q\)-deformed Yang--Mills theories in \((2+1)\)-dimensions within a lattice Hamiltonian formulation.

In the present work, the deformation originates from the metric-dependent realization of the Heisenberg algebra introduced in previous sections. Since some realizations may lead to diagonal tensors with vanishing components, additional care is required in defining the corresponding gauge actions.

\subsection{Degenerate sectors and effective geometry}

\begin{definition}[Effective non-degenerate sector]
\label{def:effective_sector}

Let
\[
g^{\mu\nu}
=
\mathrm{diag}
\left(
g^{00},
g^{11},
g^{22},
g^{33}
\right)
\]
be the diagonal tensor induced by the metric-deformed Heisenberg algebra.

The \emph{effective non-degenerate sector} is defined as the submanifold
\[
\mathcal{M}_{\mathrm{eff}}
\subseteq
\mathcal{M},
\]
spanned only by the coordinates satisfying
\[
g^{\mu\mu}\neq0.
\]

If the number of non-vanishing diagonal components is denoted by $d_{\mathrm{eff}}$, then \(d_{\mathrm{eff}}\) is called the \emph{effective spacetime dimension}. The restriction of \(g^{\mu\nu}\) to \(\mathcal{M}_{\mathrm{eff}}\) defines the effective metric $
g_{\mathrm{eff}}^{ab}$, where the indices $
a,b=0,\dots,d_{\mathrm{eff}}-1$
run only over the dynamically active coordinates.
\end{definition}

\begin{proposition}[Dimensional reduction induced by metric degeneracy]
\label{prop:dimensional_reduction}

Suppose that one or more diagonal components satisfy $g^{\mu\mu}=0$.
Then the corresponding coordinate directions become dynamically inactive in the deformed theory, in the sense that the associated derivative contributions disappear from the deformed differential operators. Consequently, the physical dynamics is entirely determined by the effective manifold $\mathcal{M}_{\mathrm{eff}}$, equipped with the non-degenerate metric $g_{\mathrm{eff}}^{ab}$.
\end{proposition}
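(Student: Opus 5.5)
The plan is to inspect each deformed differential operator built so far and check that every derivative along a direction with $g^{\mu\mu}=0$ enters multiplied by $\sqrt{|g^{\mu\mu}|}$ or $|g^{\mu\mu}|$, so that it drops out. Let $I_0=\{\mu : g^{\mu\mu}=0\}$ and let $I_{\mathrm{eff}}$ be its complement, so $|I_{\mathrm{eff}}|=d_{\mathrm{eff}}$ as in Definition~\ref{def:effective_sector}.

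First, the kinetic operators. In $\Box_q$ each $\partial_\mu^2$ carries the coefficient $|g^{\mu\mu}|$, and in $D_q$ of \eqref{eq:Dq_definition} each $\partial_\mu$ carries $\sqrt{|g^{\mu\mu}|}$. Setting these coefficients to zero for $\mu\in I_0$ gives
\[
D_q=\sum_{a\in I_{\mathrm{eff}}}\pm\gamma^a\sqrt{|g^{aa}|}\,\partial_a,
\qquad
\Box_q=\sum_{a\in I_{\mathrm{eff}}}\pm|g^{aa}|\,\partial_a^2 .
\]
These are exactly the $q$-Dirac operator and D'Alembertian built from $g_{\mathrm{eff}}^{ab}$. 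The relation $D_q^2=\Box_q\mathbf 1_4$ survives on the sub-Clifford algebra generated by $\{\gamma^a\}_{a\in I_{\mathrm{eff}}}$, because the cross terms still cancel by $\{\gamma^a,\gamma^b\}=2\eta^{ab}$. The one caveat is that the signs in $g_{\mathrm{eff}}$ may not be Lorentzian, since the tables contain cases such as $g^{00}=0$. In such cases I would state the result for the induced (possibly Euclidean) signature, with the Clifford relation taken with respect to $\eta$ restricted to $I_{\mathrm{eff}}$.

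Second, the gauge sector. Because $h_\mu=1/\sqrt{|g^{\mu\mu}|}$ is singular when $g^{\mu\mu}=0$, I would not evaluate $D_\mu^{(q)}$ directly for $\mu\in I_0$. Instead I would use the simplified equation \eqref{eq:q_dirac_simplified}, in which the factor $\sqrt{|g^{\mu\mu}|}\,h_\mu=1$ has already cancelled. The kinetic term $\sqrt{|g^{\mu\mu}|}\,\partial_\mu$ then vanishes for $\mu\in I_0$. Following the convention already used in the constant-metric examples of Section~\ref{sec:table-gauge-field}, I would set $A_\mu=0$ for $\mu\in I_0$, which removes the remaining $e\gamma^\mu A_\mu$ term. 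For $F^{(q)}_{\mu\nu}$ in \eqref{eq:F_final}, every component with an index in $I_0$ then vanishes, so only the block $F^{(q)}_{ab}$ with $a,b\in I_{\mathrm{eff}}$ survives. This block is the field strength of the covariant derivatives $D_a^{(q)}$ defined from $g_{\mathrm{eff}}$.

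Third, the dependence on the inactive coordinates. Since no $\partial_\mu$ with $\mu\in I_0$ appears in the equations, the $x^\mu$ with $\mu\in I_0$ enter only parametrically, through the background fields. If the backgrounds are independent of those coordinates, which holds for the constant cases in the tables, the dynamics factors through $\mathcal M_{\mathrm{eff}}$ and the claim follows.

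I expect the main obstacle to be the sense in which the statement holds. The convention $A_\mu=0$ for degenerate directions is an input rather than a derived fact, and a non-constant $g^{\mu\mu}(x)$ that vanishes only on a subset cannot be handled this way, because $h_\mu$ blows up there. I would therefore state precisely that the proposition is proved for $g^{\mu\mu}\equiv0$ on an open set, with that gauge condition imposed. I would also note that the boxed field strength is interpreted through the limit $\sqrt{|g^{\mu\mu}|}\,h_\mu A_\mu\to A_\mu$, not pointwise on the degenerate components.
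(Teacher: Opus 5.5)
Your proposal is correct and follows the same route as the paper's proof. The prefactors $|g^{\mu\mu}|$ and $\sqrt{|g^{\mu\mu}|}$ on $\partial_\mu$ in $\Box_q$, $D_q$ and the simplified gauge Dirac operator remove the degenerate directions, leaving the diagonal $g_{\mathrm{eff}}^{ab}$, which is invertible; you should say so explicitly, since the paper does. Your version is more careful than the paper's, which never notices that $D_\mu^{(q)}$ itself carries the singular factor $h_\mu$, or that $A_\mu=0$ along degenerate directions is imposed rather than derived.

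One small correction: after dropping $h_\mu A_\mu$, the mixed components $F^{(q)}_{\mu a}$ with $\mu\in I_0$ reduce to $ie\,\partial_\mu(h_aA_a)$. These vanish only if the active fields are also independent of $x^\mu$. This does not affect the conclusion, because only the block $F^{(q)}_{ab}$ enters $S^{(q)}_{\mathrm{YM}}$ on $\mathcal{M}_{\mathrm{eff}}$.
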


\begin{proof}

The deformed differential operators introduced in previous sections contain factors proportional to the metric components \(g^{\mu\mu}\). Therefore, whenever
\[
g^{\mu\mu}=0,
\]
the corresponding derivative term vanishes identically from the deformed operator.

As a result, the associated coordinate direction does not contribute to the gauge sector, the fermionic sector, or the kinetic structure of the theory. The remaining non-vanishing directions span the effective manifold $
\mathcal{M}_{\mathrm{eff}}$. By construction, the restriction $
g_{\mathrm{eff}}^{ab}$
contains only non-vanishing diagonal components and is therefore non-degenerate. Hence it admits a well-defined inverse $
(g_{\mathrm{eff}})_{ab},
$
allowing the consistent definition of geometric and gauge-theoretic quantities on \(\mathcal{M}_{\mathrm{eff}}\).

\end{proof}

\begin{remark}[Invariant integration measure]

All action functionals appearing below are understood to be defined on the effective manifold \(\mathcal{M}_{\mathrm{eff}}\). Accordingly, the invariant integration measure takes the form
\[
d^{d_{\mathrm{eff}}}x\,
\sqrt{
\left|
\det g_{\mathrm{eff}}
\right|
}.
\]

The notation $
d^{d_{\mathrm{eff}}}x$
denotes integration over the coordinates spanning the effective non-degenerate sector. For example, if exactly three diagonal components of \(g^{\mu\nu}\) are non-vanishing, then
\[
d^{d_{\mathrm{eff}}}x=d^3x.
\]

Therefore, vanishing metric components do not produce a mathematical inconsistency, but rather induce an effective dimensional reduction determined by the structure of the deformed metric itself.

\end{remark}

\subsection{Deformed Yang--Mills action}

\begin{definition}[Deformed Yang--Mills action]
\label{def:YM_action}

The deformed Yang--Mills action is defined by
\begin{equation}
\label{eq:YM_action}
S_{\mathrm{YM}}^{(q)}
=
-\frac14
\int_{M_{\mathrm{eff}}}
d^{d_{\mathrm{eff}}}x\,
\sqrt{
\left|
\det g_{\mathrm{eff}}
\right|
}
\,
\operatorname{tr}
\left(
F_{ab}^{(q)}
F_{(q)}^{ab}
\right).
\end{equation}

Indices are raised and lowered with the effective metric:
\begin{equation}
F_{(q)}^{ab}
=
g_{\mathrm{eff}}^{ac}
g_{\mathrm{eff}}^{bd}
F_{cd}^{(q)}.
\label{eq:F_raised}
\end{equation}

\end{definition}

\begin{proposition}[Gauge invariance of the deformed Yang--Mills action]
\label{prop:YM_gauge_invariance}

The action \(S_{\mathrm{YM}}^{(q)}\) is invariant under the gauge transformations
\begin{equation}
A_a
\longrightarrow
U A_a U^{-1}
+
U(\partial_a U^{-1}),
\qquad
F_{ab}^{(q)}
\longrightarrow
U F_{ab}^{(q)} U^{-1},
\label{eq:gauge_transform_YM}
\end{equation}
where \(U(x)\in G\).

\end{proposition}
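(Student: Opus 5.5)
The plan is to reduce everything to the covariance of the deformed covariant derivative, Proposition~\ref{prop:gauge_transform}, and then to use cyclicity of the trace. There are three steps: (1) show that the field strength $F^{(q)}_{ab}$ transforms in the adjoint, $F^{(q)}_{ab}\to UF^{(q)}_{ab}U^{-1}$; (2) show that the Lagrangian density $\operatorname{tr}(F^{(q)}_{ab}F^{ab}_{(q)})$ is pointwise invariant; (3) note that the measure and the index-raising tensors are gauge-invariant background data.

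For step (1) I would avoid the explicit component formula \eqref{eq:F_final} and use the operator definition \eqref{eq:field_strength_def}. By Proposition~\ref{prop:gauge_transform}, $D^{(q)}_a\to UD^{(q)}_aU^{-1}$ as operators acting on sections in the fundamental representation. Then
\[
[UD^{(q)}_aU^{-1},\,UD^{(q)}_bU^{-1}] = U[D^{(q)}_a,D^{(q)}_b]U^{-1},
\]
since the inner factors $U^{-1}U$ cancel. This gives $F^{(q)}_{ab}\to UF^{(q)}_{ab}U^{-1}$ as an operator identity. Because $F^{(q)}_{ab}$ has no derivative part (all $\partial$-terms cancel in the commutator, as in the proof of Proposition~\ref{prop:deformed_field_strength}), it is a matrix-valued multiplication operator, so the identity holds pointwise as matrices.

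For steps (2) and (3), the tensors $g^{ab}_{\mathrm{eff}}$ are gauge-invariant backgrounds. They are scalar functions commuting with $U$, so \eqref{eq:F_raised} gives $F^{ab}_{(q)}\to UF^{ab}_{(q)}U^{-1}$ as well. Cyclicity of the trace then yields
\[
\operatorname{tr}\bigl(UF^{(q)}_{ab}U^{-1}\,UF_{(q)}^{ab}U^{-1}\bigr)=\operatorname{tr}\bigl(F^{(q)}_{ab}F^{ab}_{(q)}\bigr).
\]
Finally, $\sqrt{|\det g_{\mathrm{eff}}|}$ and the domain $\mathcal{M}_{\mathrm{eff}}$ of Definition~\ref{def:effective_sector} do not involve $A$. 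The integrand is therefore invariant pointwise, and so is $S^{(q)}_{\mathrm{YM}}$. No integration by parts is needed, so boundary terms do not arise.

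The main obstacle is step (1), specifically the precise form of the gauge transformation of $A_a$. Conjugating $D^{(q)}_a=\partial_a+ie\,h_aA_a$ gives
\[
UD^{(q)}_aU^{-1}=\partial_a+U(\partial_aU^{-1})+ie\,h_a\,UA_aU^{-1}.
\]
This equals $\partial_a+ie\,h_aA'_a$ only if
\[
ie\,h_aA'_a=ie\,h_a\,UA_aU^{-1}+U(\partial_aU^{-1}),
\]
that is, only if the inhomogeneous term in \eqref{eq:gauge_transform_YM} is rescaled by $1/(ie\,h_a)$. I would state explicitly that the transformation rule is understood in this normalization, which is what makes Proposition~\ref{prop:gauge_transform} hold; for constant $h_a$ it is just a rescaling of $A$. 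I would then invoke that proposition as given. With that convention fixed, the remaining steps are purely algebraic. I would also check that the abelian case is consistent with Corollary~\eqref{eq:F_abelian}: there $U$ commutes with $F$, so $F^{(q)}$ is strictly invariant.
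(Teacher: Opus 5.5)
Your proof is correct, and its last two steps (pointwise invariance of $\operatorname{tr}(F^{(q)}_{ab}F^{ab}_{(q)})$ by cyclicity of the trace, and gauge-invariance of $g^{ab}_{\mathrm{eff}}$ and of the measure $\sqrt{|\det g_{\mathrm{eff}}|}$) are exactly the paper's argument.

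The difference is in step (1). The paper takes $F^{(q)}_{ab}\to UF^{(q)}_{ab}U^{-1}$ as part of the hypothesis in \eqref{eq:gauge_transform_YM} and never connects it to the rule for $A_a$. You instead derive it from the covariance of $D^{(q)}_a$, using $[UD_aU^{-1},UD_bU^{-1}]=U[D_a,D_b]U^{-1}$ and the fact that the commutator is a multiplication operator.

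That derivation is what exposes the normalization problem, and you are right about it. With the inhomogeneous term exactly as printed, $D^{(q)}_a$ is not covariant, so Proposition~\ref{prop:gauge_transform} fails as stated. Already for $G=U(1)$ with $U=e^{ie\alpha}$, the abelian field strength \eqref{eq:F_abelian} shifts by a multiple of $\partial_a(h_b\partial_b\alpha)-\partial_b(h_a\partial_a\alpha)$. This vanishes identically only when the $h_a$ are equal constants. So the two halves of \eqref{eq:gauge_transform_YM} are in general mutually inconsistent, and the paper's short argument is valid only once $F^{(q)}_{ab}\to UF^{(q)}_{ab}U^{-1}$ is simply postulated.

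Your version, with $A_a\to UA_aU^{-1}+\frac{1}{ie\,h_a}U(\partial_aU^{-1})$, makes the transformation rules consistent and turns the statement into one that is actually proved. The cost is a few extra lines, which the paper saves by assuming the adjoint transformation of $F^{(q)}_{ab}$.
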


\begin{proof}

Under the gauge transformation,
\[
F_{ab}^{(q)}
\longrightarrow
U F_{ab}^{(q)} U^{-1}.
\]

Hence,
\[
F_{ab}^{(q)}F_{(q)}^{ab}
\longrightarrow
U
\left(
F_{ab}^{(q)}F_{(q)}^{ab}
\right)
U^{-1}.
\]

Using cyclicity of the trace,
\[
\operatorname{tr}(UXU^{-1})
=
\operatorname{tr}(X),
\]
we obtain
\[
\operatorname{tr}
\left(
F_{ab}^{(q)}F_{(q)}^{ab}
\right)
\longrightarrow
\operatorname{tr}
\left(
F_{ab}^{(q)}F_{(q)}^{ab}
\right).
\]

Since the effective metric is treated as a fixed background structure, the measure
\[
d^{d_{\mathrm{eff}}}x
\sqrt{
\left|
\det g_{\mathrm{eff}}
\right|
}
\]
is invariant under internal gauge transformations. Therefore the action is gauge-invariant.

\end{proof}

\subsection{Fermionic sector}

\begin{definition}[Deformed fermionic action]
\label{def:fermion_action}

The fermionic action associated with the deformed geometry is defined by
\begin{equation}
\label{eq:fermion_action}
S_{\mathrm{ferm}}^{(q)}
=
\int_{M_{\mathrm{eff}}}
d^{d_{\mathrm{eff}}}x\,
\sqrt{
\left|
\det g_{\mathrm{eff}}
\right|
}
\,
\bar{\psi}
\left(
i\gamma^a D_a^{(q)}
-
m
\right)
\psi.
\end{equation}

Here \(\gamma^a\) denote the gamma matrices associated with the effective non-degenerate sector.

\end{definition}

\begin{proposition}[Gauge invariance of the fermionic action]
\label{prop:fermion_gauge_invariance}

The action \(S_{\mathrm{ferm}}^{(q)}\) is invariant under the gauge transformations
\begin{equation}
\psi\to U\psi,
\qquad
\bar{\psi}\to\bar{\psi}U^{-1},
\qquad
D_a^{(q)}
\to
U D_a^{(q)} U^{-1}.
\label{eq:gauge_transform_fermion}
\end{equation}

\end{proposition}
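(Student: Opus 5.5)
The plan is to show that the Lagrangian density $\bar\psi(i\gamma^a D_a^{(q)}-m)\psi$ is pointwise gauge invariant. Since the measure $d^{d_{\mathrm{eff}}}x\,\sqrt{|\det g_{\mathrm{eff}}|}$ involves only the background metric, it is unaffected by internal gauge transformations, exactly as in the proof of Proposition~\ref{prop:YM_gauge_invariance}. Invariance of the integrand therefore gives invariance of the action.

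The first step is to justify the operator statement $D_a^{(q)}\to U D_a^{(q)}U^{-1}$. This is Proposition~\ref{prop:gauge_transform} restricted to the effective indices $a$. Because $h_a$ is a gauge-invariant scalar background, I would make the check explicit. Act on $U\psi$ and expand:
\[
D_a^{(q)}(U\psi)=(\partial_a U)\psi+U\partial_a\psi+ie\,h_a A'_a U\psi .
\]
I then need the transformed potential $A'_a$ to cancel the term $(\partial_a U)\psi$. With the normalization of \eqref{eq:gauge_transform_A}, which lacks the factor $1/(ie\,h_a)$, this cancellation does not literally occur. I expect this to be the main obstacle. My first approach is to read the transformation law of $A_a$ in the rescaled form
\[
A_a\to UA_aU^{-1}+\frac{1}{ie\,h_a}\,U\partial_aU^{-1}.
\]
With this form one gets $(D_a^{(q)})'(U\psi)=U D_a^{(q)}\psi$, which is equivalent to the stated covariance. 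Equivalently, one can absorb $e\,h_a$ into the normalization of $A_a$, as the paper does for the constant-metric case. The point to note is that $h_a$ commutes with $U$ because it is a c-number function, so no derivative of $h_a$ survives.

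Next I would handle the gamma matrices and the conjugate spinor. The matrices $\gamma^a$ act on spinor indices and $U$ acts on internal gauge indices, so $[\gamma^a,U]=0$. Also $\bar\psi=\psi^\dagger\gamma^0$, and unitarity of $U\in G$ gives $\bar\psi\to\psi^\dagger U^\dagger\gamma^0=\bar\psi U^{-1}$, in agreement with \eqref{eq:gauge_transform_fermion}.

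Combining these pieces yields
\[
\bar\psi U^{-1}\bigl(i\gamma^a U D_a^{(q)}U^{-1}-m\bigr)U\psi=\bar\psi U^{-1}U\bigl(i\gamma^aD_a^{(q)}-m\bigr)U^{-1}U\psi=\bar\psi\bigl(i\gamma^aD_a^{(q)}-m\bigr)\psi .
\]
The mass term is invariant by the same cancellation $U^{-1}U=1$. Integrating this invariant density against the invariant measure completes the argument.
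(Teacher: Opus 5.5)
Your proposal is correct and follows essentially the same route as the paper: conjugate the integrand by $U$, use $[\gamma^a,U]=0$ and $U^{-1}U=1$, and note that the background measure is untouched. The one addition is your explicit check of $D_a^{(q)}\to U D_a^{(q)}U^{-1}$, which the paper simply takes as given since it is among the stated transformations; your remark that this covariance actually requires $A_a\to UA_aU^{-1}+\frac{1}{ie\,h_a}U\partial_aU^{-1}$, rather than the literal law \eqref{eq:gauge_transform_A}, is correct.
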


\begin{proof}

Under the gauge transformations,
\[
\bar{\psi}
\left(
i\gamma^a D_a^{(q)}-m
\right)
\psi
\]
transforms into
\[
\bar{\psi}
U^{-1}
\left(
i\gamma^a U D_a^{(q)} U^{-1}-m
\right)
U\psi.
\]

Since the gamma matrices commute with the internal gauge transformations,
\[
[\gamma^a,U]=0,
\]
and \(U^{-1}U=1\), we obtain
\[
\bar{\psi}
\left(
i\gamma^a D_a^{(q)}-m
\right)
\psi.
\]

Therefore the fermionic action remains invariant.

\end{proof}

\subsection{Total action}

\begin{definition}[Total deformed action]

The complete deformed gauge theory is defined by
\begin{equation}
S_{\mathrm{total}}^{(q)}
=
S_{\mathrm{YM}}^{(q)}
+
S_{\mathrm{ferm}}^{(q)}.
\label{eq:total_action}
\end{equation}

\end{definition}

\begin{corollary}

The total action \(S_{\mathrm{total}}^{(q)}\) is gauge-invariant.

\end{corollary}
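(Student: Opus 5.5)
The plan is to obtain the corollary from linearity of the total action. By the definition of the total deformed action, $S_{\mathrm{total}}^{(q)} = S_{\mathrm{YM}}^{(q)} + S_{\mathrm{ferm}}^{(q)}$. Both summands are built from the same gauge field $A_a$ and the same background metric. Both are acted on by one and the same local transformation $U(x)\in G$, which acts simultaneously by \eqref{eq:gauge_transform_YM} on $A_a$ and by \eqref{eq:gauge_transform_fermion} on $\psi,\bar\psi$. The first step is to check that these two prescriptions are consistent. The transformation $A_a \to U A_a U^{-1} + U(\partial_a U^{-1})$ used in Proposition~\ref{prop:YM_gauge_invariance} should be exactly the one that produces $D_a^{(q)}\to U D_a^{(q)} U^{-1}$ in Proposition~\ref{prop:gauge_transform}. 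So I invoke Proposition~\ref{prop:gauge_transform} to say that a single transformation realizes all the stated transformation laws at once.

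Next I would apply Proposition~\ref{prop:YM_gauge_invariance} to conclude $S_{\mathrm{YM}}^{(q)}[A^U]=S_{\mathrm{YM}}^{(q)}[A]$. That result relies on $F^{(q)}_{ab}\to UF^{(q)}_{ab}U^{-1}$, which follows from Proposition~\ref{prop:deformed_field_strength}: the field strength is the commutator $[D_a^{(q)},D_b^{(q)}]$, and conjugating both factors conjugates the commutator. I would then apply Proposition~\ref{prop:fermion_gauge_invariance} to get $S_{\mathrm{ferm}}^{(q)}[\psi^U,\bar\psi^U,A^U]=S_{\mathrm{ferm}}^{(q)}[\psi,\bar\psi,A]$. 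Both propositions use the fact that the measure $d^{d_{\mathrm{eff}}}x\sqrt{|\det g_{\mathrm{eff}}|}$ and the effective sector $\mathcal M_{\mathrm{eff}}$ (Definition~\ref{def:effective_sector}) do not transform, since $g^{\mu\mu}$ is a gauge-inert background. The integration domain is therefore unchanged. Adding the two equalities gives invariance of the sum.

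The main obstacle is the consistency check in the first step, not the final addition. With the deformed derivative $D_\mu^{(q)}=\partial_\mu+ie\,h_\mu A_\mu$, direct substitution of the standard law for $A_\mu$ does not obviously yield exact conjugation when $h_\mu\neq1$. The inhomogeneous term picks up a factor $ie\,h_\mu$ that must cancel against $U\partial_\mu U^{-1}$. To handle this, I would first try to read Proposition~\ref{prop:gauge_transform} as fixing the appropriately normalized transformation of $A_\mu$, namely with the inhomogeneous term rescaled by $(ie\,h_\mu)^{-1}$. I would then note that the corollary only needs the two propositions to refer to this same transformation. Under that reading the argument reduces to the two cited invariance results plus linearity of the integral.
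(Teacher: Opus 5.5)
Your argument is correct and follows the same route as the paper, which simply invokes Propositions~\ref{prop:YM_gauge_invariance} and~\ref{prop:fermion_gauge_invariance} for the single transformation $U(x)$ and adds the two invariant pieces. Your extra consistency check goes beyond the paper and repairs a real slip: exact covariance $D_\mu^{(q)}\to U D_\mu^{(q)}U^{-1}$ holds only for $A_\mu\to UA_\mu U^{-1}+(ie\,h_\mu)^{-1}U(\partial_\mu U^{-1})$, not for the law written in \eqref{eq:gauge_transform_A}, and once both propositions are read with this one transformation the corollary follows exactly as you describe.
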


\begin{proof}

The result follows immediately from Propositions
\ref{prop:YM_gauge_invariance}
and
\ref{prop:fermion_gauge_invariance}.

\end{proof}

\section{Conclusions and further works}
\label{sec:conclusions}

In this work we have extended the metric-deformed Heisenberg algebras \(M_1\) and \(M_2\) introduced in our previous study \cite{Jaramillo2026a} to include gauge interactions. The central contribution of our construction is the deformed covariant derivative
\[
D_\mu^{(q)} = \partial_\mu + \frac{ie}{\sqrt{|g^{\mu\mu}|}} A_\mu,
\]
which generalizes the standard QED covariant derivative to metric-deformed backgrounds. When the metric is constant and Minkowskian, the standard formulation is recovered, ensuring consistency with established physics.

The key mathematical findings of this paper can be summarized as follows. First, we have shown that for non-constant metric backgrounds, the associated field strength \(F_{\mu\nu}^{(q)} = [D_\mu^{(q)}, D_\nu^{(q)}]\) acquires new terms proportional to derivatives of the metric. These terms vanish identically when the metric is constant, which confirms that our deformation is controlled entirely by the spacetime metric. Second, we have explicitly constructed gauge-invariant actions for a deformed Yang-Mills theory and for a deformed Dirac action, proving their invariance under deformed gauge transformations. Third, we have placed our construction within the broader context of deformed gauge theories, discussing its relation to quantum group deformations \cite{Castellani1992, Watts1997}, non-commutative spacetime deformations \cite{Kupriyanov2020}, and deformation quantization \cite{vanNuland2021, Anco2004}.

Our results provide a consistent geometric foundation for \(q\)-deformed gauge theories where the deformation parameter \(q_\mu = \sqrt{|g^{\mu\mu}|}\) is directly tied to the spacetime metric. This interpretation differs fundamentally from purely algebraic \(q\)-deformations and offers a new perspective on the relation between quantum deformations and spacetime geometry. Several directions for further research are opened by this work. Throughout this paper, the metric components \(g^{\mu\nu}\) have been treated as a fixed background field. A natural next step is to promote them to dynamical degrees of freedom. This would lead to a \(q\)-deformed theory of gravity, in which the deformed covariant derivative couples to the energy-momentum tensor in a novel way. The resulting field equations and their compatibility with general covariance remain to be investigated. The relation \(q_\mu = \sqrt{|g^{\mu\mu}|}\) implies that a non-constant metric gives rise to a spatially varying deformation parameter. This feature invites a systematic study of the renormalization group flow of \(q\) in quantum field theory, as well as potential implications for Lorentz invariance violation and quantum gravity phenomenology. Such an analysis would require a careful treatment of scale dependence in the deformed framework. Although our construction has been developed for the abelian gauge group \(U(1)\), it extends straightforwardly to non-abelian groups such as \(SU(3)_c\) (QCD) or \(SU(2)_L \times U(1)_Y\) (electroweak theory). In these extensions, one could compute \(q\)-deformed observables, including correlation functions and cross sections, as possible signatures of metric-induced deformations at collider energies.
From a more fundamental perspective, it would be interesting to investigate whether the triple \((C^\infty(M), L^2(S), D_q)\) forms a \(q\)-deformed spectral triple in the sense of Connes' noncommutative geometry \cite{Connes1994}. If such a structure exists, it would provide a noncommutative geometric interpretation of our construction and could lead to a spectral formulation of \(q\)-deformed gauge theories. While the present work has focused on the mathematical and structural aspects of metric-deformed gauge theories, the framework is well suited for phenomenological applications. The additional terms in the field strength, which depend on derivatives of the metric, could in principle lead to observable corrections in high-energy scattering processes. A systematic phenomenological analysis, including the derivation of experimental constraints, is left for future work.

Although the main focus of this work has been the construction of a classical deformed gauge theory, the framework is suitable for quantization. A natural gauge-fixing condition is the deformed Lorenz gauge $\frac{1}{\sqrt{|g^{\mu\mu}|}}\partial_\mu A^\mu = 0$, which reduces to the ordinary Lorenz gauge in the undeformed limit $g^{\mu\mu}\to\eta^{\mu\mu}$. This condition can be implemented via the standard Faddeev–Popov procedure, introducing ghost fields. For tree-level calculations, however, explicit gauge fixing is not required when working with physical external states. In the undeformed limit our actions and gauge condition recover the usual Yang–Mills–Dirac theory, confirming that our construction is a consistent deformation of standard gauge theories, analogous to the non-commutative $U(1)$ gauge theory of Kupriyanov, Kurkov and Vitale \cite{Kupriyanov2020} and the $q$-deformed $\mathrm{SU}(N)$ Yang–Mills theory of Hayata, Hidaka and Watanabe \cite{Hayata2026}.

In summary, we have demonstrated that metric-deformed Heisenberg algebras naturally give rise to a family of deformed gauge theories with a clear geometric interpretation. Our construction is mathematically consistent, reduces to standard QED in the appropriate limit, and opens several promising avenues for future research in quantum gravity, noncommutative geometry, and high-energy physics.

\appendix

\section{Detailed examples of metric-deformed Heisenberg algebras}
\label{app:heisenberg-examples}

This appendix collects the explicit realizations of the metric-deformed Heisenberg algebras $M_1$ and $M_2$ discussed in Section~\ref{sec:prelim}. The tables below correspond to the examples presented in the main text.

\subsection{$q$-$\hbar$-deformed Heisenberg algebra}

From (\ref{quantization}), the metric components for the $q$-$\hbar$-deformed Heisenberg algebra are given in Table~\ref{tab:app-qhbar} \cite{Volovich-Arefeva-91}.

\begin{table}[H]
\centering
\small
\caption{Metric components $g^{\mu\nu}$ for the $q$-$\hbar$-deformed Heisenberg algebra}
\label{tab:app-qhbar}
\begin{tabular}{|c|c|c|c|c|c|c|c|c|}
\hline
$j$ & $k$ & $g^{00}$ & $g^{11}$ & $g^{22}$ & $g^{33}$ & $g^{01}$ & $g^{02}$ & $g^{03}$ \\
\hline
1 & 1 & $-q$ & $1$ & $q^{1/2}$ & $0$ & $0$ & $0$ & $0$ \\
\hline
2 & 2 & $-q$ & $1$ & $0$ & $q^{1/2}$ & $0$ & $0$ & $0$ \\
\hline
3 & 3 & $-q$ & $q^{1/2}$ & $0$ & $1$ & $0$ & $0$ & $0$ \\
\hline
1 & 2 & $-q$ & $0$ & $0$ & $1$ & $0$ & $0$ & $q^{1/2}$ \\
\hline
1 & 3 & $-q$ & $0$ & $1$ & $0$ & $0$ & $q^{1/2}$ & $0$ \\
\hline
2 & 3 & $-q$ & $1$ & $0$ & $0$ & $q^{1/2}$ & $0$ & $0$ \\
\hline
\end{tabular}
\end{table}

Substituting these values into the defining relations yields the deformed commutation relation
\[
p_k x_j - q x_j p_k = -i \hbar q^{1/2}, \qquad j,k = 1,2,3,
\]
recovering the $q$-$\hbar$-deformed Heisenberg algebra \cite{Volovich-Arefeva-91, Jaramillo2025, Jaramillo2026a}.

\subsection{Metric Deformed New $q$-Heisenberg algebra}
Under the identifications $\{x_{1},y_{1}\}=x$, $\{x_{2},y_{2}\}=y$ and $\{x_{3},y_{3}\}=z$, the metric components $g^{\mu\nu}$ for each relation are given in the following tables. The metric components for the three defining relations of the new $q$-Heisenberg algebra are summarized in Tables~\ref{tab:app-new1}, \ref{tab:app-new2-m1}, \ref{tab:app-new2-m2}, and \ref{tab:app-new3}.

\begin{table}[H]
\centering
\small
\caption{Metric components for relation (\ref{eq:new1})}
\label{tab:app-new1}
\begin{tabular}{|c|c|c|c|c|c|c|c|c|c|}
\hline
Algebra & $\alpha$ & $\beta$ & $g^{00}$ & $g^{11}$ & $g^{22}$ & $g^{33}$ & $g^{01}$ & $g^{02}$ & $g^{03}$ \\
\hline
$M_1$ & 1 & 1 & 1 & $-q^{-n}$ & $q^{n-1}\Psi$ & 0 & 0 & 0 & 0 \\
\hline
$M_2$ & 1 & 2 & 1 & 0 & 0 & $-q^{n}$ & 0 & 0 & $q^{n-1}\Psi$ \\
\hline
$M_2$ & 1 & 3 & 1 & 0 & $-q^{n}$ & 0 & $q^{n-1}\Psi$ & 0 & 0 \\
\hline
$M_2$ & 2 & 1 & 1 & 0 & 0 & $q^{-n}$ & 0 & 0 & $q^{n-1}\Psi$ \\
\hline
$M_1$ & 2 & 2 & 1 & 0 & $-q^{n}$ & $q^{n-1}\Psi$ & 0 & 0 & 0 \\
\hline
$M_2$ & 2 & 3 & 1 & $-q^{n}$ & 0 & 0 & $q^{n-1}\Psi$ & 0 & 0 \\
\hline
$M_2$ & 3 & 1 & 1 & 0 & 0 & $-q^{n}$ & 0 & $q^{n-1}\Psi$ & 0 \\
\hline
$M_2$ & 3 & 2 & 1 & $-q^{n}$ & 0 & 0 & $q^{n-1}\Psi$ & 0 & 0 \\
\hline
$M_1$ & 3 & 3 & 1 & $q^{n-1}\Psi$ & 0 & $-q^{n}$ & 0 & 0 & 0 \\
\hline
\end{tabular}
\end{table}

\begin{table}[H]
\centering
\small
\caption{Metric components for relation (\ref{eq:new2}) in $M_1$}
\label{tab:app-new2-m1}
\begin{tabular}{|c|c|c|c|c|c|c|}
\hline
$\alpha$ & $\lambda$ & $g^{00}$ & $g^{11}$ & $g^{22}$ & $g^{33}$ \\
\hline
1 & 2 & 0 & 0 & $q^{m}$ & 1 \\
\hline
1 & 3 & 0 & $-1$ & $-q^{m}$ & 0 \\
\hline
2 & 1 & 0 & 0 & $-1$ & $-q^{m}$ \\
\hline
2 & 3 & 0 & 1 & 0 & $q^{m}$ \\
\hline
3 & 1 & 0 & $q^{m}$ & 1 & 0 \\
\hline
3 & 2 & 0 & $-q^{m}$ & 0 & $-1$ \\
\hline
\end{tabular}
\end{table}

\begin{table}[H]
\centering
\small
\caption{Metric components for relation (\ref{eq:new2}) in $M_2$}
\label{tab:app-new2-m2}
\begin{tabular}{|c|c|c|c|c|c|c|}
\hline
$\alpha$ & $\lambda$ & $g^{00}$ & $g^{11}$ & $g^{22}$ & $g^{33}$ \\
\hline
1 & 2 & $q^{m}$ & 0 & 0 & $-1$ \\
\hline
1 & 3 & $q^{m}$ & 0 & $-1$ & 0 \\
\hline
2 & 1 & 1 & 0 & 0 & $q^{m}$ \\
\hline
2 & 3 & $q^{m}$ & $-1$ & 0 & 0 \\
\hline
3 & 1 & $-1$ & 0 & $q^{m}$ & 0 \\
\hline
3 & 2 & $-1$ & $q^{m}$ & 0 & 0 \\
\hline
\end{tabular}
\end{table}

\begin{table}[H]
\centering
\small
\caption{Metric components for relation (\ref{eq:new3})}
\label{tab:app-new3}
\begin{tabular}{|c|c|c|c|c|c|c|}
\hline
$\lambda$ & $\beta$ & $g^{00}$ & $g^{11}$ & $g^{22}$ & $g^{33}$ \\
\hline
1 & 1 & $q^{l}$ & $-q^{l+1}$ & $-\hbar^{l}\Phi$ & 0 \\
\hline
1 & 2 & $q^{l}$ & 0 & 0 & $-q^{l+1}$ \\
\hline
1 & 3 & $q^{l}$ & 0 & $-q^{l+1}$ & 0 \\
\hline
2 & 2 & $q^{l}$ & 0 & $-q^{l+1}$ & $-\hbar^{l}\Phi$ \\
\hline
2 & 3 & $q^{l}$ & $-q^{l+1}$ & 0 & 0 \\
\hline
3 & 3 & $q^{l}$ & $-\hbar^{l}\Phi$ & 0 & $-q^{l+1}$ \\
\hline
\end{tabular}
\end{table}

\subsection{$q$-generalized Heisenberg algebra}

For the embedding of the $q$-generalized Heisenberg algebra given by the relation (\ref{eq:gen-Heis}) , the metric components are
\[
g^{00}=-1,\quad g^{11}=1,\quad g^{22}=0,\quad g^{33}=-q,
\]
with all other off‑diagonal components zero. This yields the $q$-Dirac operator given in the main text (and in Appendix~\ref{app:dirac-realizations}) \cite{Razavinia-Lopes2022}.

\section{Realizations of the $q$-Dirac operator}
\label{app:dirac-realizations}

This appendix provides the explicit realizations of the $q$-Dirac operator corresponding to the $q$-deformed Heisenberg algebras discussed in Appendix~\ref{app:heisenberg-examples}. The tables below follow the same ordering as in the previous appendix.

\subsection{New $q$-Heisenberg algebra}

\begin{table}[H]
\centering
\small
\caption{Dirac operators for relation (\ref{eq:new1})}
\label{tab:app-dirac-new1}
\begin{tabular}{|c|c|c|c|c|c|c|}
\hline
$\alpha$ & $\beta$ & $g^{00}$ & $g^{11}$ & $g^{22}$ & $g^{33}$ & Dirac operator \\
\hline
1 & 1 & $1$ & $-q^{-n}$ & $q^{n-1}\Psi$ & $0$ & $\gamma^{0}\partial_t - \gamma^{x} q^{-n/2}\partial_x - \gamma^{y} q^{(n-1)/2}\Psi^{1/2}\partial_y$ \\
\hline
1 & 2 & $1$ & $0$ & $0$ & $-q^{n}$ & $\gamma^{0}\partial_t - \gamma^{z} q^{n/2}\partial_z$ \\
\hline
1 & 3 & $1$ & $0$ & $-q^{n}$ & $0$ & $\gamma^{0}\partial_t - q^{n/2}\gamma^{y}\partial_y$ \\
\hline
2 & 1 & $1$ & $0$ & $0$ & $-q^{n}$ & $\gamma^{0}\partial_t - q^{n/2}\gamma^{z}\partial_z$ \\
\hline
2 & 2 & $1$ & $0$ & $-q^{n}$ & $q^{n-1}\Psi$ & $\gamma^{0}\partial_t - \gamma^{y} q^{n/2}\partial_y - \gamma^{z} q^{(n-1)/2}\partial_z$ \\
\hline
2 & 3 & $1$ & $-q^{n}$ & $0$ & $0$ & $\gamma^{0}\partial_t - q^{n/2}\gamma^{x}\partial_x$ \\
\hline
3 & 1 & $1$ & $0$ & $0$ & $-q^{n}$ & $\gamma^{0}\partial_t - q^{n/2}\gamma^{z}\partial_z$ \\
\hline
3 & 2 & $1$ & $-q^{n}$ & $0$ & $0$ & $\gamma^{0}\partial_t - q^{n/2}\gamma^{x}\partial_x$ \\
\hline
3 & 3 & $1$ & $q^{n-1}\Psi$ & $0$ & $-q^{n}$ & $\gamma^{0}\partial_t - q^{(n-1)/2}\Psi^{1/2}\gamma^{x}\partial_x - q^{n/2}\gamma^{z}\partial_z$ \\
\hline
\end{tabular}
\end{table}

\begin{table}[H]
\centering
\small
\caption{Dirac operators for relation (\ref{eq:new2}) in $M_1$}
\label{tab:app-dirac-new2-m1}
\begin{tabular}{|c|c|c|c|c|c|c|}
\hline
$\alpha$ & $\lambda$ & $g^{00}$ & $g^{11}$ & $g^{22}$ & $g^{33}$ & Dirac operator \\
\hline
1 & 2 & $0$ & $0$ & $q^{m}$ & $1$ & $-\gamma^{y} q^{m/2}\partial_y - \gamma^{z}\partial_z$ \\
\hline
1 & 3 & $0$ & $-1$ & $-q^{m}$ & $0$ & $-\gamma^{x}\partial_x - \gamma^{y} q^{m/2}\partial_y$ \\
\hline
2 & 1 & $0$ & $0$ & $-1$ & $-q^{m}$ & $-\gamma^{y}\partial_y - q^{m/2}\gamma^{z}\partial_z$ \\
\hline
2 & 3 & $0$ & $1$ & $0$ & $q^{m}$ & $-\gamma^{x}\partial_x - q^{m/2}\gamma^{z}\partial_z$ \\
\hline
3 & 1 & $0$ & $q^{m}$ & $1$ & $0$ & $-q^{m/2}\gamma^{x}\partial_x - \gamma^{y}\partial_y$ \\
\hline
3 & 2 & $0$ & $-q^{m}$ & $0$ & $-1$ & $-\gamma^{x} q^{m/2}\partial_x - \gamma^{z}\partial_z$ \\
\hline
\end{tabular}
\end{table}

\begin{table}[H]
\centering
\small
\caption{Dirac operators for relation (\ref{eq:new2}) in $M_2$}
\label{tab:app-dirac-new2-m2}
\begin{tabular}{|c|c|c|c|c|c|c|}
\hline
$\alpha$ & $\lambda$ & $g^{00}$ & $g^{11}$ & $g^{22}$ & $g^{33}$ & Dirac operator \\
\hline
1 & 2 & $q^{m}$ & $0$ & $0$ & $-1$ & $-\gamma^{0} q^{m/2}\partial_t - \gamma^{z}\partial_z$ \\
\hline
1 & 3 & $q^{m}$ & $0$ & $-1$ & $0$ & $\gamma^{0} q^{m/2}\partial_t - \gamma^{y}\partial_y$ \\
\hline
2 & 1 & $1$ & $0$ & $0$ & $q^{m}$ & $\gamma^{0}\partial_t - q^{m/2}\gamma^{z}\partial_z$ \\
\hline
2 & 3 & $q^{m}$ & $-1$ & $0$ & $0$ & $\gamma^{0} q^{m/2}\partial_t - \gamma^{x}\partial_x$ \\
\hline
3 & 1 & $-1$ & $0$ & $q^{m}$ & $0$ & $\gamma^{0}\partial_t - \gamma^{y} q^{m/2}\partial_y$ \\
\hline
3 & 2 & $-1$ & $q^{m}$ & $0$ & $0$ & $\gamma^{0}\partial_t - \gamma^{x} q^{m/2}\partial_z$ \\
\hline
\end{tabular}
\end{table}

\begin{table}[H]
\centering
\small
\caption{Dirac operators for relation (\ref{eq:new3})}
\label{tab:app-dirac-new3}
\begin{tabular}{|c|c|c|c|c|c|c|}
\hline
$\lambda$ & $\beta$ & $g^{00}$ & $g^{11}$ & $g^{22}$ & $g^{33}$ & Dirac operator \\
\hline
1 & 1 & $q^{l}$ & $-q^{l+1}$ & $-\Phi$ & $0$ & $\gamma^{0} q^{l/2}\partial_t - \gamma^{x} q^{(l+1)/2}\partial_x - \gamma^{y} \Phi^{1/2}\partial_y$ \\
\hline
1 & 2 & $q^{l}$ & $0$ & $0$ & $-q^{l+1}$ & $\gamma^{0} q^{l/2}\partial_t - \gamma^{z} q^{(l+1)/2}\partial_z$ \\
\hline
1 & 3 & $q^{l}$ & $0$ & $-q^{l+1}$ & $0$ & $\gamma^{0} q^{l/2}\partial_t - q^{(l+1)/2}\gamma^{y}\partial_y$ \\
\hline
2 & 2 & $q^{l}$ & $0$ & $-q^{l+1}$ & $-\Phi$ & $\gamma^{0} q^{l/2}\partial_t - \gamma^{y} q^{(l+1)/2}\partial_y - \gamma^{z} \Phi^{1/2}\partial_z$ \\
\hline
2 & 3 & $q^{l}$ & $-q^{l+1}$ & $0$ & $0$ & $\gamma^{0} q^{l/2}\partial_t - \gamma^{x} q^{(l+1)/2}\partial_x$ \\
\hline
3 & 3 & $q^{l}$ & $-\Phi$ & $0$ & $-q^{l+1}$ & $\gamma^{0} q^{l/2}\partial_t - \gamma^{x} \Phi^{1/2}\partial_x - \gamma^{z} q^{(l+1)/2}\partial_z$ \\
\hline
\end{tabular}
\end{table}

\subsection{$q$-generalized Heisenberg algebra}

The Dirac operator for this case is
\[
D_{\text{gen}} = \gamma^{0}\frac{\partial}{\partial t} - \gamma^{x}\frac{\partial}{\partial x} - \gamma^{z}\sqrt{q}\,\frac{\partial}{\partial z},
\]
with no $y$-dependence because $g^{22}=0$.

\subsection{$q$-$\hbar$ Heisenberg algebra}

\begin{table}[H]
\centering
\small
\caption{Dirac operators for the $q$-$\hbar$-deformed Heisenberg algebra}
\label{tab:app-dirac-qhbar}
\begin{tabular}{|c|c|c|c|c|c|c|}
\hline
$j$ & $k$ & $g^{00}$ & $g^{11}$ & $g^{22}$ & $g^{33}$ & $D_{q-\hbar}$ \\
\hline
1 & 1 & $-q$ & $1$ & $q^{1/2}$ & $0$ & $\gamma^{0} \sqrt{q} \, \partial_t - \gamma^{x} \partial_x - \gamma^{y} q^{1/4} \partial_y$ \\
\hline
2 & 2 & $-q$ & $1$ & $0$ & $q^{1/2}$ & $\gamma^{0} \sqrt{q} \, \partial_t - \gamma^{x} \partial_x - \gamma^{z} q^{1/4} \partial_z$ \\
\hline
3 & 3 & $-q$ & $q^{1/2}$ & $0$ & $1$ & $\gamma^{0} \sqrt{q} \, \partial_t - q^{1/4} \gamma^{x} \partial_x - \gamma^{z} \partial_z$ \\
\hline
\end{tabular}
\end{table}

These operators satisfy $D_{q-\hbar}^2 = \Box_q \mathbf{1}_4$ with
\[
\Box_q = q \frac{\partial^2}{\partial t^2} - \frac{\partial^2}{\partial x^2} - q^{1/2} \frac{\partial^2}{\partial y^2}
\quad\text{or}\quad
\Box_q = q \frac{\partial^2}{\partial t^2} - \frac{\partial^2}{\partial x^2} - q^{1/2} \frac{\partial^2}{\partial z^2},
\]
depending on the case \cite{Volovich-Arefeva-91}.

\subsection{Simple distinguished case}

A particularly simple realization is obtained with
\[
g^{00}=1,\quad g^{11}=-q^{n},\quad g^{22}=-q,\quad g^{33}=-1,
\]
yielding the $q$-Dirac operator
\[
D_{\text{simple}} = \gamma^{0}\partial_t - \gamma^{x} q^{n/2}\partial_x - \gamma^{y} q^{1/2}\partial_y - \gamma^{z}\partial_z,
\]
and $\Box_q = \partial_t^2 - q^{n}\partial_x^2 - q\partial_y^2 - \partial_z^2$.
\bibliographystyle{unsrt}

\end{document}